\newcommand{\C}{\mathcal{C}}
\newcommand{\M}{\mathbb{M}}
\newcommand{\N}{\mathbb{N}}
\renewcommand{\O}{\mathcal{O}}
\newcommand{\Z}{\mathbb{Z}}
\newcommand{\zero}{\mathbb{0}}
\newcommand{\paths}{\text{Paths}}
\newcommand{\inp}{\text{in}}
\newcommand{\out}{\text{out}}
\newcommand{\varstates}{X_Q}
\newcommand{\varpaths}{X_P}
\newcommand{\varin}{X_I}
\newcommand{\varout}{X_O}
\newcommand{\trs}{\mathcal{A}_{out}}
\newcommand{\logic}{\textsf{PL}}
\newcommand{\Var}{\text{Var}}
\renewcommand{\dim}{\text{\#}}
\newcommand{\nfa}{\text{NFA}}
\newcommand{\trans}{\text{Trans}}
\newcommand{\suma}{\text{Sum}}
\newcommand{\init}{\ensuremath{\textsf{init}}}
\newcommand{\final}{\ensuremath{\textsf{final}}}
\newcommand{\conv}{\mathbin{\otimes}}
\newcommand{\bigconv}{\mathop{\bigotimes}}
\newcommand{\new}{\mathrel{\ensuremath{\stackrel{\makebox[0pt]{\mbox{\tiny def}}}{_=}}}}
\newcommand{\prefix}{\ensuremath{\mathrel{\sqsubseteq}}}
\newcommand{\sem}[1]{\ensuremath{[\![#1]\!]}}
\newcommand{\first}{\ensuremath{\triangleleft}}
\newcommand{\last}{\ensuremath{\triangleright}}
\title{A Pattern Logic for Automata with Outputs\thanks{We warmly
    thank the anonymous reviewers for their helpful comments, and
    Isma\"el Jecker for spotting a bug in a preliminary version of the
    paper. E. Filiot is a research associate of F.R.S.-FNRS.
		He is supported by the French ANR Project ExStream (ANR-13-JS02-0010), the ARC Project
                Transform F\'ed\'eration Wallonie-Bruxelles and the
                FNRS CDR project J013116F. N. Mazzocchi is a PhD
                student funded by a FRIA fellowship from the
                F.R.S.-FNRS. J.-F. Raskin is supported by an ERC Starting Grant (279499: inVEST), by the ARC project
		$-$ Non-Zero Sum Game Graphs: Applications to Reactive Synthesis and Beyond $-$
		funded by the F\'ed\'eration Wallonie-Bruxelles, and by a Professeur Francqui de Recherche grant awarded by the Francqui Fondation.
}}
\author{
	Emmanuel Filiot,
	Nicolas Mazzocchi,
	and Jean-Fran\c{c}ois Raskin
}
\institute{Universit\'e libre de Bruxelles}
\begin{document}
	\maketitle
	
	\begin{abstract}
            We introduce a logic to express structural properties
            of automata with string inputs and, possibly,  outputs in
            some monoid. In this logic, the set of predicates talking about
            the output values is parametric, and we provide sufficient
            conditions on the predicates under which the
            model-checking problem is decidable. We then consider
            three particular automata models (finite automata,
            transducers and automata weighted by integers --
            sum-automata --) and instantiate the
            generic logic for each of them. We give tight complexity
            results for the three logics and the model-checking
            problem, depending on whether the formula is fixed or
            not. We study the expressiveness of our logics by
            expressing classical structural patterns characterising 
            for instance finite ambiguity and
            polynomial ambiguity in the case of finite automata,
            determinisability and finite-valuedness in the case of
            transducers and sum-automata. Consequently to our complexity results, we directly obtain
            that these classical properties can be decided in
            \textsc{PTime}.
	\end{abstract}

	\newcommand{\figureScale}{.8}

\newcommand{\drawPcp}{
	\scalebox{\figureScale}{
		\begin{tikzpicture}[>=stealth, node distance=3.5cm, thick]
			\node[state, accepting, initial left, initial text =] (q1) {};
			\node[state, accepting, initial left, initial text =, right of = q1] (q2) {};
		
			\path[->]
				(q1) edge [loop above] node[above] {$
					\begin{array}{c|c}
						i \in \{1, \dots, n\} & u_i
					\end{array}
					$} (q1)
				(q2) edge [loop above] node[above] {$
					\begin{array}{c|c}
						i \in \{1, \dots, n\} & v_i
					\end{array}
					$} (q2)
			;
		\end{tikzpicture}
	}
}

\newcommand{\drawBTP}{
	\scalebox{\figureScale}{
		\begin{tikzpicture}[>=stealth, node distance=2cm, thick]
		\node[state, initial, initial text=] (q00) {$q_{0, 0}$};
		\node[state, right of = q00] (q10) {$q_{1, 0}$};
		\node[state, right of = q10] (q20) {$q_{2, 0}$};
		\node[state, right of = q20] (qk0) {$q_{k, 0}$};
		
		\node[state, initial, initial text=, below of = q00] (q01) {$q_{0, 1}$};
		\node[state, right of = q01] (q11) {$q_{1, 1}$};
		\node[state, right of = q11] (q21) {$q_{2, 1}$};
		\node[state, right of = q21] (qk1) {$q_{k, 1}$};
		
		\node[state, initial, initial text=, below of = q01] (q0k) {$q_{0, k}$};
		\node[state, right of = q0k] (q1k) {$q_{1, k}$};
		\node[state, right of = q1k] (q2k) {$q_{2, k}$};
		\node[state, right of = q2k] (qkk) {$q_{k, k}$};
		
		\path[->]
		(q00) edge node[above] {$
			\begin{array}{c|c}
			u_{1, 0} & v_{1, 0}
			\end{array}
			$} (q10)
		(q10) edge [loop above] node[above] {$
			\begin{array}{c|c}
			u'_{1, 0} & v'_{1, 0}
			\end{array}
			$} (q10)
		(q10) edge node[above] {$
			\begin{array}{c|c}
			u_{2, 0} & v_{2, 0}
			\end{array}
			$} (q20)
		(q20) edge [loop above] node[above] {$
			\begin{array}{c|c}
			u'_{2, 0} & v'_{2, 0}
			\end{array}
			$} (q20)
		(q20) edge[dashed] (qk0)
		(qk0) edge [loop above] node[above] {$
			\begin{array}{c|c}
			u'_{k, 0} & v'_{k, 0}
			\end{array}
			$} (qk0)
		
		(q01) edge node[above] {$
			\begin{array}{c|c}
			u_{1, 1} & v_{1, 1}
			\end{array}
			$} (q11)
		(q11) edge [loop above] node[above] {$
			\begin{array}{c|c}
			u'_{1, 1} & v'_{1, 1}
			\end{array}
			$} (q11)
		(q11) edge node[above] {$
			\begin{array}{c|c}
			u_{2, 1} & v_{2, 1}
			\end{array}
			$} (q21)
		(q21) edge [loop above] node[above] {$
			\begin{array}{c|c}
			u'_{2, 1} & v'_{2, 1}
			\end{array}
			$} (q21)
		(q21) edge[dashed] (qk1)
		(qk1) edge [loop above] node[above] {$
			\begin{array}{c|c}
			u'_{k, 1} & v'_{k, 1}
			\end{array}
			$} (qk1)
		
		(q0k) edge node[above] {$
			\begin{array}{c|c}
			u_{1, k} & v_{1, k}
			\end{array}
			$} (q1k)
		(q1k) edge [loop above] node[above] {$
			\begin{array}{c|c}
			u'_{1, k} & v'_{1, k}
			\end{array}
			$} (q1k)
		(q1k) edge node[above] {$
			\begin{array}{c|c}
			u_{2, k} & v_{2, k}
			\end{array}
			$} (q2k)
		(q2k) edge [loop above] node[above] {$
			\begin{array}{c|c}
			u'_{2, k} & v'_{2, k}
			\end{array}
			$} (q2k)
		(q2k) edge[dashed] (qkk)
		(qkk) edge [loop above] node[above] {$
			\begin{array}{c|c}
			u'_{k, k} & v'_{k, k}
			\end{array}
			$} (qk0)
		;
		\end{tikzpicture}
	}
}

\newcommand{\drawWeakTwinning}{
	\scalebox{\figureScale}{
		\begin{tikzpicture}[>=stealth, node distance=2cm, thick]
			\node[state, initial, initial text=] (qi) {$q_0$};
			\node[state, right of = qi] (q1) {$q_1$};
			\node[state, right of = q1] (q2) {$q_2$};
			\node[state, accepting, right of = q2] (qf) {$q_3$};
			
			\path[->]
				(qi) edge (q1)
				(q1) edge [loop above] node[above] {$
					\begin{array}{c|c}
						u & u_1
					\end{array}
					$} (q1)
				(q1) edge [loop below] node[below] {$
					\begin{array}{c|c}
						v & v_1
					\end{array}
					$} (q1)
				(q1) edge node[above] {$
					\begin{array}{c|c}
						u & u_2
					\end{array}
					$} (q2)
				(q2) edge [loop above] node[above] {$
					\begin{array}{c|c}
						v & v_2
					\end{array}
					$} (q2)
				(q2) edge (qf)
			;
		\end{tikzpicture}
	}
}

\newcommand{\drawTwinning}{
	\scalebox{\figureScale}{
		\begin{tikzpicture}[>=stealth, node distance=2cm, thick]
		\node[state, initial, initial text=] (q1) {$q_1$};
		\node[state, right of = q1] (p1) {$p_1$};
		\node[state, accepting, right of = p1] (r1) {$r_1$};
		\node[state, initial, initial text=, below of = q1] (q2) {$q_2$};
		\node[state, right of = q2] (p2) {$p_2$};
		\node[state, accepting, right of = p2] (r2) {$r_2$};
		
		\path[->]
		(q1) edge node[above] {$
			\begin{array}{c|c}
			u & v_1
			\end{array}
			$} (p1)
		(p1) edge [loop above] node[above] {$
			\begin{array}{c|c}
			u' & v'_1
			\end{array}
			$} (p1)
		(p1) edge node[above] {$
			\begin{array}{c|c}
			u'' & v''_1
			\end{array}
			$} (r1)
		(q2) edge node[above] {$
			\begin{array}{c|c}
			u & v_2
			\end{array}
			$} (p2)
		(p2) edge [loop above] node[above] {$
			\begin{array}{c|c}
			u' & v'_2
			\end{array}
			$} (p2)
		(p2) edge node[above] {$
			\begin{array}{c|c}
			u'' & v''_2
			\end{array}
			$} (r2)
		;
		\end{tikzpicture}
	}
}

\newcommand{\drawFiniteAmbiguity}{
	\scalebox{\figureScale}{
		\begin{tikzpicture}[>=stealth, node distance=2cm, thick]
			\node[state, initial, initial text=] (qi) {$q_0$};
			\node[state, right of = qi] (q1) {$q_1$};
			\node[state, right of = q1] (q2) {$q_2$};
			\node[state, accepting, right of = q2] (qf) {$q_3$};
			
			\path[->]
				(qi) edge (q1)
				(q1) edge [loop above] node[above] {$u$} (q1)
				(q1) edge node[above] {$u$} (q2)
				(q2) edge [loop above] node[above] {$u$} (q2)
				(q2) edge (qf)
			;
		\end{tikzpicture}
	}
}

\newcommand{\drawDumbbell}{
	\scalebox{\figureScale}{
		\begin{tikzpicture}[>=stealth, node distance=2cm, thick]
			\node[state, initial, initial text=] (qi) {$s_0$};
			\node[state, right of = qi] (q1) {$s_1$};
			\node[state, right of = q1] (q2) {$s_2$};
			\node[state, accepting, right of = q2] (qf) {$s_3$};
			
			\path[->]
				(qi) edge (q1)
				(q1) edge [loop above] node[above] {$
					\begin{array}{c|c}
						u & v_1
					\end{array}$} (q1)
				(q1) edge node[above] {$
					\begin{array}{c|c}
						u & v_2
					\end{array}$} (q2)
				(q2) edge [loop above] node[above] {$
					\begin{array}{c|c}
						u & v_3
					\end{array}$} (q2)
				(q2) edge (qf)
			;
		\end{tikzpicture}
	}
}

\newcommand{\drawCoTerminalCircuits}{
	\scalebox{\figureScale}{
		\begin{tikzpicture}[>=stealth, node distance=2cm, thick]
			\node[state, initial, initial text=] (qi) {$q_0$};
			\node[state, above right of = qi] (q1) {$q_1$};
			\node[state, below right of = qi] (q2) {$q_2$};
			\node[state, accepting, above right of = q2] (qf) {$q_3$};
			
			\path[->]
				(qi) edge (q1)
				(qi) edge (q2)
				(q1) edge [loop above] node[above] {$
					\begin{array}{c|c}
					u & v_1
					\end{array}$} (q1)
				(q2) edge [loop below] node[below] {$
					\begin{array}{c|c}
					u & v_2
					\end{array}$} (q2)
				(q1) edge (qf)
				(q2) edge (qf)
			;
		\end{tikzpicture}
	}
}

\newcommand{\drawW}{
	\scalebox{\figureScale}{
		\begin{tikzpicture}[>=stealth, node distance=2cm, thick]
			\node[state, initial, initial text=] (qi) {$r_0$};
			\node[state, above right of = qi] (q1) {$r_1$};
			\node[state, above left of = q1] (q2) {$r_2$};
			\node[state, above right of = q1] (q3) {$r_3$};
			\node[state, below right of = q3] (q4) {$r_4$};
			\node[state, above right of = q4] (q5) {$r_5$};
			\node[state, accepting, below right of = q4] (qf) {$r_6$};
			
			\path[->]
				(qi) edge (q1)
				(q1) edge[bend left] node[below left] {$
					\begin{array}{c|c}
						u_1 & v_1
					\end{array}$} (q2)
				(q2) edge [loop above] node[above] {$u_2$} (q2)
				(q2) edge[bend left] node[above right] {$u_3$} (q1)
				(q1) edge node[below right] {$u_1$} (q3)
				(q3) edge [loop above] node[above] {$
					\begin{array}{c|c}
						u_2 & v_2
					\end{array}$} (q3)
				(q3) edge node[below left] {$u_3$} (q4)
				(q4) edge[bend left] node[above left] {$u_1$} (q5)
				(q5) edge [loop above] node[above] {$u_2$} (q5)
				(q5) edge[bend left] node[below right] {$u_3$} (q4)
				(q4) edge (qf)
			;
		\end{tikzpicture}
	}
}

\newcommand{\drawExpAmbiguity}{
	\scalebox{\figureScale}{
		\begin{tikzpicture}[>=stealth, node distance=2cm, thick]
		\node[state, initial, initial text=] (qi) {$q_0$};
		\node[state, right of = qi] (q1) {$q_1$};
		\node[state, accepting, right of = q1] (qf) {$q_2$};
		
		\path[->]
		(qi) edge (q1)
		(q1) edge [loop above] node[above] {$u$} (q1)
		(q1) edge [loop below] node[below] {$u$} (q1)
		(q1) edge (qf)
		;
		\end{tikzpicture}
	}
}

\newcommand{\drawPathEqual}{
	\scalebox{\figureScale}{
		\begin{tikzpicture}[>=stealth, node distance=2cm, thick]
			\node[state, accepting, initial, initial text=$B_{\pi_i=\pi_j}$] (qi) {};
			
			\path[->]
				(qi) edge [loop right] node[right] {$\{ (\sigma_1,\dots,\sigma_n) \mid\sigma_i = \sigma_j\}$} (qi)
			;
		\end{tikzpicture}
	}
}

\newcommand{\drawPrefix}{
	\scalebox{\figureScale}{
		\begin{tikzpicture}[>=stealth, node distance=2cm, thick]
			\node[state, accepting, initial, initial text=$B_{\pi_i \prefix_I \pi_j}$] (qi) {};
			
			\path[->]
				(qi) edge [loop right] node[right] {$\{ (\sigma_1,\dots,\sigma_n) \mid \sigma_i \in \Lambda \implies \sigma_i = \sigma_j \}$} (qi)
			;
		\end{tikzpicture}
	}
}

\newcommand{\drawSize}{
	\scalebox{\figureScale}{
		\begin{tikzpicture}[>=stealth, node distance=2cm, thick]
		\node[state, accepting, initial, initial text=$B_{\pi_i \leq_I^{\text{len}} \pi_j}$] (qi) {};
		
		\path[->]
			(qi) edge [loop right] node[right] {$\{ (\sigma_1,\dots,\sigma_n) \mid \sigma_j = \bot \implies \sigma_i = \bot \}$} (qi)
		;
		\end{tikzpicture}
	}
}

\newcommand{\drawStateLeftInit}{
	\scalebox{\figureScale}{
		\begin{tikzpicture}[>=stealth, node distance=5.2cm, thick]
		\node[state, initial, initial text=$B_{\init(\pi_i^\vartriangleleft)}$] (q0) {};
		\node[state, accepting, right of = q0] (q1) {};
		
		\path[->]
			(q0) edge node[above] {$\{ (q_1, \dots, q_n)
                          \in Q \mid q_i\in Q_0\}$} (q1)
			(q1) edge [loop right] node[right] {$\Sigma^n$} (q1)
		;
		\end{tikzpicture}
	}
}

\newcommand{\drawStateLeftFinal}{
	\scalebox{\figureScale}{
		\begin{tikzpicture}[>=stealth, node distance=5.2cm, thick]
		\node[state, initial, initial text=$B_{\init(\pi_i^\vartriangleleft)}$] (q0) {};
		\node[state, accepting, right of = q0] (q1) {};
		
		\path[->]
			(q0) edge node[above] {$\{ (q_1, \dots, q_n)
                          \in Q \mid q_i\in Q_F\}$} (q1)
			(q1) edge [loop right] node[right] {$\Sigma^n$} (q1)
		;
		\end{tikzpicture}
	}
}

\newcommand{\drawStateRightInit}{
	\scalebox{\figureScale}{
		\begin{tikzpicture}[>=stealth, node distance=6cm, thick]
		\node[state, initial, initial text=$B_{\init(\pi_i^\vartriangleright)}$] (q0) {};
		\node[state, accepting, right of = q0] (q1) {};
		
		\path[->]
			(q0) edge [loop below] node[below] {$\Sigma^n$} (q0)
			(q0) edge node[above] {$\{ (q_1, \dots, q_n)
                          \in Q \mid q_i\in Q_0\}$} (q1)
			(q1) edge [loop below] node[below] {$\{
                          (\sigma_1,\dots,\sigma_n)\in \Sigma^n\mid \sigma_i=\bot\}$} (q1)
		;
		\end{tikzpicture}
	}
}

\newcommand{\drawStateRightFinal}{
	\scalebox{\figureScale}{
		\begin{tikzpicture}[>=stealth, node distance=6cm, thick]
		\node[state, initial, initial text=$B_{\init(\pi_i^\vartriangleright)}$] (q0) {};
		\node[state, accepting, right of = q0] (q1) {};
		
		\path[->]
			(q0) edge [loop below] node[below] {$\Sigma^n$} (q0)
			(q0) edge node[above] {$\{ (q_1, \dots, q_n)
                          \in Q \mid q_i\in Q_F\}$} (q1)
			(q1) edge [loop below] node[below] {$\{
                          (\sigma_1,\dots,\sigma_n)\in \Sigma^n\mid \sigma_i=\bot\}$} (q1)
		;
		\end{tikzpicture}
	}
}

\newcommand{\drawStateRight}{
	\scalebox{\figureScale}{
		\begin{tikzpicture}[>=stealth, node distance=6cm, thick]
		\node[state, initial, initial text=$B_{\pi_x^\vartriangleright = q}$] (q0) {};
		\node[state, right of = q0] (q1) {};
		\node[state, accepting, right of = q1] (q2) {};
		
		\path[->]
			(q0) edge [loop below] node[below] {$(\Sigma_\bot)^n$} (q0)
			(q0) edge node[above] {$\{ (q_1, \dots, q_n) \in Q \cup \{\bot\} \mid q_x = q \}$} (q1)
			(q1) edge node[above] {$\{ (a_1, \dots, a_n) \in \Sigma\cup\{\bot\} \mid a_x = \bot\}$} (q2)
			(q2) edge [loop below] node[below] {$(\Sigma_\bot)^n$} (q2)
		;
		\end{tikzpicture}
	}
}

\newcommand{\drawNP}{
	\scalebox{\figureScale}{
		\begin{tikzpicture}[>=stealth, node distance=2cm, thick]
		\node[state, initial, initial text=] (q0) {};
		\node[state, right of = q0] (q1) {};
		\node[state, right of = q1] (q) {};
		\node[state, accepting, right of = q] (qn) {};
		
		\path[->]
			(q0) edge[bend left] node[above] {$x_1$} (q1)
			(q0) edge[bend right] node[below] {$-x_1$} (q1)
			(q1) edge[dashed, bend left] (q)
			(q1) edge[dashed, bend right] (q)
			(q) edge[bend left] node[above] {$x_k$} (qn)
			(q) edge[bend right] node[below] {$-x_k$} (qn)
			(qn) edge[loop above] node[above] {$0$} (qn)
		;
		\end{tikzpicture}
	}
}

	\vspace{-7mm}
\section{Introduction}\label{sec:introduction}
\vspace{-2mm}

\paragraph{Motivations} An important aspect of automata theory is the
definition of automata
subclasses with particular properties, of algorithmic interest for
instance. As an example, the inclusion problem for
non-deterministic finite automata is \textsc{PSpace-c} but becomes
\textsc{PTime} if the automata are $k$-ambiguous for a fixed
$k$~\cite{SICOMP::StearnsH1985}.

By automata
theory, we mean automata in the general sense of finite state
machines processing finite words. This includes what we call \emph{automata with outputs},
which may also produce output values in a fixed monoid $\M =
(D,\oplus,\zero)$. In such an automaton, the transitions are extended
with an (output) value in $D$, and the value of an accepting path is
the sum (for $\oplus$) of all the values occurring along its
transitions. Automata over finite words in $\Lambda^*$ and with outputs
in $\M$ define subsets of $\Lambda^*\times D$ as follows: to any input
word $w\in\Lambda^*$, we associate the set of values of all the
accepting paths on $w$. For example,
transducers are automata with outputs in a free monoid: they process
input words and produce output words and therefore define binary
relations of finite words~\cite{Bers79}.

The many decidability
properties of finite automata do not carry over to transducers, and
many restrictions have been defined in the literature to recover
decidability, or just to define subclasses relevant to particular
applications. The inclusion problem for transducer is
undecidable~\cite{DBLP:journals/jacm/Griffiths68}, but decidable for \emph{finite-valued}
transducers~\cite{SICOMP::Weber1993}. Another well-known subclass is that of 
the \emph{determinisable} transducers~\cite{BealCPS03}, defining
sequential functions of words. Finite-valuedness and determinisability
are two properties decidable in \textsc{PTime}, i.e., it is
decidable in \textsc{PTime}, given a transducer, whether it is
finite-valued (resp. determinisable). As a second example of automata with outputs, we also consider
sum-automata, i.e. automata with outputs in $(\Z,+,0)$, which defines
relations from words to $\Z$. Properties such as
functionality, determinisability, and $k$-valuedness (for a fixed $k$)
are decidable in \textsc{PTime} for
sum-automata~\cite{DBLP:journals/corr/abs-1111-0862,DBLP:conf/fsttcs/FiliotGR14}.

In our experience, it is quite often the case that deciding a subclass
goes in two steps: $(1)$ define a characterisation of the subclass
through a ``simple'' pattern, $(2)$ show how to decide the existence of a
such a pattern. For instance, the determinisable transducers have been
characterised via the so called \emph{twinning property}~\cite{DBLP:journals/tcs/Choffrut77,DBLP:journals/iandc/WeberK95,DBLP:journals/tcs/BealC02}, which, said
briefly, asks that the output words produced by any two different paths
on input words of the form $uv^n$ cannot differ unboundedly when $n$
grows, with a suitable definition of ``differ''. Quite often, the most difficult part is
step $(1)$ and step $(2)$ is technical but less difficult to
achieve, as long as we do not seek for optimal complexity
bounds (by this we mean that \textsc{PTime} is good enough, and
obtaining the best polynomial degree is not the objective). We even
noticed that in transducer theory, even though step $(2)$ share common
techniques (reduction to emptiness of reversal-bounded counter
machines for instance), the algorithms are often ad-hoc to the
particular subclass considered. Here is a non-exhaustive list of subclasses of
transducers which are decidable in \textsc{PTime}: determinisable
transducers~\cite{DBLP:journals/tcs/Choffrut77,DBLP:journals/iandc/WeberK95,BealCPS03,DBLP:journals/tcs/BealC02,DBLP:journals/jalc/AllauzenM03,ChoffrutDecomposition}, functional transducers~\cite{BealCPS03,DBLP:journals/tcs/BealC02}, $k$-sequential
transducers (for a fixed $k$)~\cite{DBLP:conf/fossacs/DaviaudJRV17}, multi-sequential
transducers~\cite{DBLP:journals/ijfcs/JeckerF18,ChoffrutDecomposition}, $k$-valued transducers (for a fixed $k$)~\cite{GurIba83},
finite-valued transducers~\cite{DBLP:conf/mfcs/SakarovitchS08,SICOMP::Weber1993}. Our goal in
this paper is to define a common tool for step $(2)$, i.e., define a
generic way of deciding a subclass characterised through a
structural pattern. More precisely, we want to define logics, 
tailored to particular monoids $\M$, able to express properties of automata with outputs in $\M$, such that
model-checking these properties on given automata can be done in
\textsc{PTime}.

\vspace{-2mm}
\paragraph{Contributions} We define a general logic, denoted
$\logic[\O]$ for ``pattern logic'',  to express properties of automata with outputs in a fixed monoid $\M = (D,\oplus,\zero)$. This
logic is parameterised by a set of predicates $\O$ interpreted on
$D$. We first give sufficient conditions under
which the problem of model-checking an
automaton with outputs in $\M$ against
a formula in this logic is decidable. Briefly, these conditions
require the existence of a machine model accepting tuples of runs 
which satisfy the atomic predicates of the logic, is closed under
union and intersection, and has decidable emptiness problem.

Then, we study three particular classes of automata with outputs: finite
automata (which can be seen as automata with outputs in a trivial
monoid with a single element), transducers (automata with outputs in a
free monoid), and sum-automata (automata with outputs in
$(\Z,+,0)$). For each of them, we define particular logics, called
$\logic_\nfa$, $\logic_\trans$ and $\logic_\suma$ to express
properties of automata with outputs in these particular
monoids. Formulas in these logics have the following form:
$$
\exists \pi_1:p_1\xrightarrow{u_1\mid v_1} q_1,\dots,\exists
\pi_n:p_n\xrightarrow{u_n\mid v_n} q_n, \C
$$
where the $\pi_i$ are path variables, the $p_i,q_i$ are state
variables, the $u_i$ are (input) word variables and the $v_i$ are
output value variables (interpreted in $D$). The subformula $\C$ is a
quantifier free Boolean combinations of constraints talking about
states, paths, input words and output values. Such a formula expresses
the fact that there exists a path $\pi_1$ from some state $p_1$ to some
state $q_1$, over some input word $u_1$, producing some value $v_1$,
some path $\pi_2$ etc. such that they all satisfy the constraints in
$\C$. In the three logics, paths can be tested for equality. Input
words can be compared with the prefix relation, w.r.t. their
length, and their membership to a regular language be
tested. States can be compared for equality, and it can be
expressed whether they are initial or final.

The predicates we take for the output values depends on the monoids. For transducers, output words can be compared
with the non-prefix relation (and by derivation $\neq$), a predicate
which cannot be negated (otherwise
model-checking becomes undecidable), and can also be compared with respect to their length, and
membership to a regular language can be tested. For sum-automata, the
output values can be compared with $<$ (and by derivation
$=,\neq,\leq$). As an example, a transducer (resp. sum-automaton) is
\emph{not} $(n-1)$-valued iff it satisfies the following
$\logic_\trans$-formula (resp. $\logic_\suma$-formula):
$$
\exists \pi_1:p_1\xrightarrow{u\mid v_1} q_1,\dots,\exists
\pi_{n}:p_{n}\xrightarrow{u\mid v_n} q_n, \bigwedge_{i=1}^n
\init(p_i)\wedge \final(q_i)\wedge \bigwedge_{1\leq i<j\leq n} v_i\neq v_j.
$$
For the three logics, we show that
deciding whether a given automaton
satisfies a given formula is \textsc{PSPace-c}. When the formula is
\emph{fixed}, the model-checking problem becomes \textsc{NLogSpace-c}
for $\logic_\nfa$ and $\logic_\trans$, and \textsc{NP-c} for $\logic_\suma$. If
output values can only be compared via disequality $\neq$ (which
cannot be negated), then $\logic_\suma$ admits \textsc{PTime}
model-checking. We show that many of the properties from the literature, including all
the properties mentioned before, can be expressed in these logics. 
As a consequence, we show that most of the \textsc{PTime} upper-bounds obtained 
for deciding subclasses of finite automata in \cite{DBLP:journals/tcs/WeberS91,DBLP:journals/ijfcs/AllauzenMR11}, of transducers in
\cite{DBLP:journals/tcs/Choffrut77,GurIba83,DBLP:journals/iandc/WeberK95,DBLP:journals/acta/Weber89,DBLP:journals/ijfcs/JeckerF18,ChoffrutDecomposition,BealCPS03,DBLP:conf/mfcs/SakarovitchS08,DBLP:conf/fossacs/DaviaudJRV17} and sum-automata in \cite{DBLP:journals/corr/abs-1111-0862,DBLP:conf/fsttcs/FiliotGR14,DBLP:conf/fossacs/DaviaudJRV17,conf/lata/BalaK13}, can be directly obtained by
expressing in our logics the structural patterns given in these papers, which
characterise these subclasses.

\vspace{-1mm}
\paragraph{Related works} In addition to the results already
mentioned, we point out that the syntax of
our logic is close to a logic, defined in \cite{DBLP:conf/lics/FigueiraL15} by Figueira
and Libkin,  to express path queries in graph databases (finite graphs
with edges labelled by a symbol). In this work, there is no
disjunction nor negation, and no distinction between input and output
values. By making such a distinction, and by adding negation and
disjunction, we were able to tailor our logics to particular automata
models and add enough power to be able to directly express classical structural automata properties.


\vspace{-2mm}
	\section{Finite Automata with Outputs}\label{sec:preliminaries}

In this section, we define a general model of finite automata
defining functions from the free monoid $\Lambda^*$ (where
$\Lambda$ is a finite input alphabet) to any monoids $\M = (D,\oplus,\zero)$.
More precisely, they are parametrised by a monoid of
\emph{output values}, read input words over some alphabet and output
elements of the output monoid, obtained by summing the output
values met along accepting paths. 

Formally,  a \emph{monoid} $\M$ is a tuple $(D,\oplus_\M,\zero_\M)$ where $D$ is a set
of elements which we call here values or sometimes outputs,
$\oplus_\M$ is an associative binary operation on $D$, for which
$\zero_\M\in D$ is neutral. Monoids of interest in this paper are the free monoid $(\Lambda^*,\cdot,\varepsilon)$ for some finite
alphabet of symbols $\Lambda$ (where $\cdot$ denotes the
concatenation), and the monoid $(\Z,+,0)$. 
We also let $\Lambda_\varepsilon = \Lambda\cup
\{\varepsilon\}$. 	For $w \in \Lambda^*$, $|w|$ denotes its length, in particular
$|\varepsilon| = 0$. The set of positions of $w$ is $\{1,\dots,|w|\}$
(and empty if $w = \epsilon$). We let $w[i]$ be the $i$th symbol of
$w$. Given $w_1,w_2$, we write $w_1\prefix
w_2$ whenever $w_1$ is a prefix of $w_2$. \emph{All over this paper, the input alphabet is denoted by the letter $\Lambda$}.


\begin{definition}[Automata with outputs]
	An \emph{automaton $A$ with outputs} over an (output) monoid 
        $\M=(D, \oplus_\M, \zero_\M)$ is a tuple $\langle Q, I,F, \Delta, \gamma \rangle$ where
	$Q$ is a non-empty finite set of states, $I\subseteq Q$
                the set of initial states, $F\subseteq Q$ the
                set of final states, 
	$\Delta \subseteq Q \times \Lambda_\varepsilon
                \times Q$ the set of transitions labelled
                with some element of $\Lambda_\varepsilon$,
	and $\gamma\colon \Delta \rightarrow D$ a
                mapping from transitions to output values\footnote{Often in the literature,
                  output values are directly given in the
                  transitions, i.e. the transition relation is a
                  (finite) subset of $Q\times
                  \Lambda_\varepsilon\times D\times Q$.
                  Our definition is then equivalent modulo \textsc{PTime} transformation,
                and allows for a clearer distinction between input and
                output mechanisms.}. The set of automata
                over $\M$ is written
                $\trs(\M)$.
\end{definition}

We write $\dim(A)$ to refer to the number of states of $A$.
A \emph{path} in $A$ is a sequence $\pi = q_0 a_1 d_1 q_1 \dots a_n d_n q_n \in Q(\Lambda_\varepsilon D Q)^*$, for $n\geq 0$, such that for all $1 \leq i \leq n$ we have $(q_{i-1}, a_i, q_i) \in \Delta$ and $\gamma(q_{i-1}, a_i, q_i) = d_i$.
The \emph{input} of $\pi$ is defined as the word $\inp(\pi) =
a_1\dots a_n$ (and $\varepsilon$ if $\pi\in Q$), the \emph{output} of
$\pi$ as the element $\out(\pi) = d_1\oplus_\M \dots \oplus_\M d_n$
(and $\zero_\M$ if $\pi\in Q$), and the \emph{size} of $\pi$ as $|\pi| = n$.
We may write $\pi : q_0\xrightarrow{\inp(\pi)\mid \out(\pi)} q_n$ to denote that $\pi$ is a path from $q_0$ to
$q_n$ on input $\inp(\pi)$ and output $\out(\pi)$.
For convenience we write $\pi^\first, \pi^\last$ to denote respectively the starting state $q_0$ and the ending state $q_n$ of the path $\pi$.
The set of all paths of $A$ is written $\paths(A)$.
A path $\pi : q_0\xrightarrow{u\mid v} q_n$ is \emph{initial} if $q_0\in I$, \emph{final} if $q_n\in F$ and accepting if it is both initial and final.
The set of accepting paths of $A$ is denoted by $\paths_{acc}(A)$. The \emph{input/output relation}
(or just relation) defined by $A$ is the set of
pairs $R(A) \subseteq \Lambda^*\times D$ defined by 
$$
R(A) = \{ (u,v)\mid \exists \pi\in
\paths_{acc}(A)\cdot \inp(\pi) = u\wedge \out(\pi) = v\}
$$



\paragraph{Finite automata, transducers and sum-automata}

In this paper, we consider three instances of
automata with outputs. First, \emph{finite automata} (over $\Lambda$), are
seen as automata with outputs in a trivial monoid (and which is therefore
ignored). \emph{Transducers}  are automata with outputs in the free 
monoid $\Gamma^*$. They define relations from
$\Lambda^*$ to $\Gamma^*$. Finally, \emph{sum-automata} are
automata with outputs in the monoid $(\mathbb{Z},+,0)$.


	\vspace{-3mm}
\section{A Pattern Logic for Automata with Outputs}\label{sec:logic}
\vspace{-1mm}

In this section, we introduce a generic pattern logic. It is built over four kind of
variables, namely path, state, input and output
variables. More precisely, we let 
$\varpaths = \{ \pi,\pi_1,\dots\}$, $\varstates = \{
q,q_1,p\dots,\}$, $\varin = \{ u,u_1,\dots\}$ and $\varout =
\{ v,v_1,\dots\}$ be disjoint and countable sets of resp. path,
state, input and output variables. We define $Terms(\varout,\oplus,\zero)$ as the set of terms
built over variables of $\varout$, a binary function symbol $\oplus$
(representing the monoid operation) and constant symbol $\zero$
(neutral element).

The logic syntax is parametrised by a set of output
predicates $\O$. Output predicates of arity $0$ are called constant
symbols, and we denote by $\O|_n$ the predicates of arity $n$. 
Predicates talking about states, paths and input words are however fixed in
the logic.


\begin{definition}
	A \emph{pattern formula} $\varphi$ over a set of output predicates $\O$ is of the form 
    $$
    	\varphi\ =\
    		\exists \pi_1\colon p_1\xrightarrow{u_1\mid v_1} q_1,
    		\dots, 
    		\exists \pi_n\colon p_n\xrightarrow{u_n\mid v_n} q_n,
    		\C
    $$
    where for all $1\leq i\leq n$, $\pi_i\in \varpaths$
    and they are all pairwise different,
    $p_i,q_i\in\varstates$, $u_i\in \varin$,
    $v_i\in\varout$, and $\mathcal{C}$ is a Boolean
    combination of atoms amongst
	$$
		\begin{array}{lcl@{\quad}r}
			\text{Input constraints} & \colon &
				u \prefix u' \mid u \in L \mid |u| \leq |u'|
				& u,u'\in \varin
		\smallskip\\
			\text{Output constraints} & \colon &
				p(t_1,\dots,t_n)
				& p \in \O|_n, t_{i} \in Terms(\varout, \oplus,\zero)
		\smallskip\\
			\text{State constraints} & \colon &
				\init(q) \mid \final(q)\mid q = q'
				& q,q'\in \varstates
		\smallskip\\
			\text{Path constraints} & \colon &
				\pi = \pi'
				& \pi,\pi'\in \varpaths
		\end{array}
	$$
	where $L$ is a regular language of words over $\Lambda$ (assumed to be represented as an NFA).
	The sequence of existential quantifiers before $\C$ in $\varphi$ is called the \emph{prefix} of $\varphi$. 
	We denote by $\logic(\O)$ the set of pattern formulas over
        $\O$, and by $\logic^+(\O)$ the fragment where output
        predicates does not occur under an odd number of negations. 
\end{definition}

The size of a formula is the number of its symbols plus the number of
states of all NFA representing the membership constraints. 
We denote by $\Var(\varphi)$ the variables occurring in any pattern formula $\varphi$, and
by $\Var_P(\varphi)$ (resp. $\Var_Q(\varphi)$,
$\Var_I(\varphi)$, $\Var_O(\varphi)$) its restriction to path
(resp. state, input, output) variables. 
We finally let
$(u = u') \new u\prefix u' \land  u'\prefix u$,
$(|u|= |u'|) \new (|u|\leq |u'|)\land (|u'|\leq |u|)$,
$(|u|< |u'|) \new \lnot (|u'|\leq |u|)$.

\paragraph{Semantics}
To define the semantics of a pattern formula $\varphi$, we first fix
some monoid $\M = (D,\oplus_\M,\zero_\M)$ together with an
interpretation ${p}^\M$ of each output predicates $p \in \O$ of
arity $\alpha(p)$, such that $p^\M \in D$ if $p$ is a constant and
$p^\M \subseteq D^{\alpha(p)}$ otherwise. 
Given a valuation $\nu\colon \varout \rightarrow D$, the
interpretation $.^\M$ 
can be inductively extended to terms $t$ by letting $\zero^{\nu, \M} = \zero_\M$,
$(t_1\oplus t_2)^{\nu, \M} = t_1^{\nu, \M}\oplus_\M t_2^{\nu, \M}$ and
$x^{\nu, \M} = \nu(x)$.

Then, a formula $\varphi\in\logic(\O)$ is interpreted in an automaton
with outputs $A\in\trs(\M)$ as a set of valuations $\sem{\varphi}_A$ of $\Var(\varphi)$ which we now define.
Each valuation $\nu\in\sem{\varphi}_A$ maps state variables to states of $A$, path variables to paths of $A$, etc.
Such a valuation $\nu$ satisfies an atom $u\prefix u'$ if $\nu(u)$ is a prefix of $\nu(u')$, $u\in L$ if $\nu(u)\in L$, $|u|\leq|u'|$ if $|\nu(u)|\leq|\nu(u')|$.
Given a predicate $p \in \O$ of arity $\alpha(p)$, an atom $p(t_1,\dots,t_{\alpha(p)})$ is satisfied by $\nu$
if $(t_1^{\nu,\M},\dots,t_{\alpha(p)}^{\nu,\M})\in p^\M$.
Finally, $\nu$ satisfies $\init(q)$ (resp. $\final(q)$) if $\nu(q)$ is
initial (resp. $\nu(q)$ is final). The
satisfiability relation is naturally extended to Boolean
combinations of atoms. 
Finally, assume that $\varphi$ is of the form
$
	\exists \pi_1\colon p_1\xrightarrow{u_1\mid v_1} q_1,
	\dots,
	\exists \pi_n \colon p_n\xrightarrow{u_n\mid v_n} q_n,
	\C
$,
we say that $A$ satisfies $\varphi$, denoted by $A \models \varphi$, if there exists a valuation $\nu$ of $\Var(\varphi)$ such that for all $i\in\{1,\dots,n\}$, $\nu(\pi_i) \colon \nu(p_i)\xrightarrow{\nu(u_i)\mid
\nu(v_i)}\nu(q_i)$ and $\nu$ satisfies $\C$ ($\nu\models \C$).
Given a pattern formula
$\varphi$ and an automaton with outputs $A$, the \emph{model-checking problem} consists in deciding
whether $A$ satisfies $\varphi$, i.e. $A \models \varphi$.


\begin{example}\label{ex:logicex} Given
$k\in\N$, the $k$-valuedness property has been already
expressed in Introduction (assuming ${=}\in\O$). The formula $
    	\exists \pi_0\colon p_0\xrightarrow{u|v_0} q_0,
    	\dots,
    	\exists \pi_k \colon p_k\xrightarrow{u|v_{k}} q_k,\C_0$ where
        $\C_0 = \bigwedge_{0\leq i<j\leq k} \pi_i\neq \pi_j\land \bigwedge_{i=0}^k \init(p_i)\land \final(q_i)$
expresses the fact that an automaton is not $(k-1)$-ambiguous (has at least
$k$ accepting paths for some input). 
\end{example}


	\section{Model-Checking Problem}\label{sec:MC}

In this section, we give sufficient conditions on the output monoid $\M$ and
the set of output predicates $\O$ by which the model-checking of automata
with outputs in $\M$ against pattern formulas over the output
predicates $\O$ is decidable. In the next sections, we study the
precise complexity of the model-checking problem for particular
monoids $\M$.

\paragraph{Tuple acceptors} Since automata with outputs can get their output values in arbitrary
monoids, to get an effective model-checking algorithm, we will assume the existence
of machines, called tuple acceptors, that can recognise
sets of word tuples. These machines will be required to satisfy some key
properties, forming the notion of \emph{good class} of tuple acceptors. First, what we
call a \emph{tuple acceptor} is a machine $M$ whose semantics is a set of
tuples of words $\sem{M}\subseteq (\Sigma^*)^n$, for some alphabet
$\Sigma$ and some arity $n\geq 1$. The notion of good class, formally
defined later, require
$(i)$ that any regular set of tuples is recognised by some machine, for a
regularity notion that we will make clear (roughly, by seeing tuples of
words as words resulting from the overlapping of all components),
$(ii)$ all output predicates (and their negation) are recognised by some machine, $(iii)$
the class is closed under union and intersection.

\paragraph{Regular sets of word tuples} Let $\Sigma$ be some
alphabet containing some symbol $\bot$, $\pi\in \Sigma^*$ and $m\geq |\pi|$. The \emph{padding} of $\pi$
with respect to $m$ is the word $\pi' = \pi \bot^{m-|\pi|}$. Let $\pi_1,\pi_2\in \Sigma^*$ and let $m =
\text{max}(|\pi_1|,|\pi_2|)$. For $j=1,2$, let $\pi'_j$ the padding of
$\pi_j$ with respect to $m$. Note that $|\pi'_1| = |\pi'_2| = m$. 
The \emph{convolution} $\pi_1\conv \pi_2$ is the word of length $m$
defined for all $1\leq i\leq n$ by $(\pi_1\conv \pi_2)[i] =
(\pi'_1[i],\pi'_2[i])$. E.g. $q_1\lambda_1d_1q_2\otimes p_1 =
(q_1,p_1)(\lambda_1,\bot)(d_1,\bot)(q_2,\bot)$. 
The convolution can be naturally extended to multiple words as follows:
$\bigconv_{i=1}^n \pi_i = \pi_1\conv (\pi_2\conv \dots \conv\pi_n)$.
\begin{definition}
A set of $n$-ary word tuples $P\subseteq (\Sigma^*)^n$ is 
\emph{regular} if $L = \{ \bigconv_{i=1}^n \pi_i
\mid (\pi_1, \dots, \pi_n)\in P \}$ is a regular language over 
$\Sigma^n$. We often identify $L$ and $P$. 
\end{definition}


\paragraph{Good class of tuple acceptors} First, any valuation $\nu$ of a
set of path variables $X$ into paths of some automaton with values in some
monoid $\M$ gives a way to interpret terms $t\in \text{Terms}(X, \oplus,
\zero)$ as follows: for $\pi\in X$, $\pi^{\nu,\M} = \out(\nu(\pi))$,
$\zero^{\nu,\M} = \zero_\M$ and $(t_1\oplus t_2)^{\nu,\M} =
t_1^{\nu,\M}\oplus_\M t_2^{\nu,\M}$. Then, for a class $\C$ (i.e. a
set) of tuple acceptors, we denote by $\C|_n$ its restriction to
acceptors of arity $n$.

\begin{definition}[Good class]
A class of tuple acceptors $\C$ is said to be
\emph{good} for an output monoid $\M = (D,\oplus_\M,\zero_\M)$, a set of output predicates $\O$ and
an interpretation $p^\M\subseteq D^{\alpha(p)}$ for all $p\in\O$ of
arity $\alpha(p)$, if the following conditions are satisfied:
\begin{enumerate}

\item for all automata with outputs $A\in \trs(\M)$ with a set of
  states $Q$ we have:
	\begin{enumerate}
	
	\item \label{goodreg}
		$\forall n\geq 1$,$\forall R\subseteq \paths(A)^n$ regular,
	        $R = \sem{M}$ for some $M\in\C|_n$.
	\item
	  all $p\in \O$ of arity $\alpha(p)$,  
		all $X = \{ \pi_1, \dots, \pi_n \}$ finite sets of path variables
		and all $t_1, \dots, t_{\alpha(p)} \in \text{Terms}(X, \oplus, \zero)$, 
		there exist $M,M'\in\C|_{n}$ such that
                \begin{enumerate}
                    \item\label{goodout}
                    $\sem{M} = \{ 
                     \left( \nu(\pi_1), \dots, \nu(\pi_n) \right) |
                     \nu \colon X\rightarrow \paths(A) \land
                     (t^{\nu, \M}_1, \dots, t^{\nu, \M}_{\alpha(p)}) \in p^\M
                  \}$
                    \item\label{goodneg}  $\sem{M'} = \paths(A)^n\setminus \sem{M}$.

                \end{enumerate}
	\end{enumerate}

\item \label{goodclos}
	$\forall n\geq 1$, $\forall M_1,M_2\in\C|_n$, there exist $M, M'\in\C|_n$ such that
	$\sem{M}=\sem{M_1}\cap \sem{M_2}$ and  $\sem{M'}=\sem{M_1}\cup \sem{M_2}$.
\end{enumerate}
We say that $\C$ is \emph{effective} if all properties are effective 
and moreover it is decidable whether $\sem{M}\neq\varnothing$ for any (effectively represented) $M\in\C$. 
We say that $\C$ is \emph{weakly good} if all properties hold except \ref{goodneg}.
\end{definition}


	


Effectiveness of a good class gives effective model-checking, as
announced.

\begin{theorem}\label{thm:algo}
    Let $\M$ be a monoid and $\O$ be a set of output predicates,
    interpreted over $\M$. If there exists an effective good class
    $\C$ (resp. effective weakly good class) of tuple
    acceptors for $\M$ and $\O$, then the
    model-checking problem of automata with outputs in $\M$
    against pattern formulas $\psi\in\logic[\O]$
    (resp. $\psi\in\logic^+[\O]$) is decidable.
\end{theorem}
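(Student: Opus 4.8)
The plan is to reduce the model-checking of a fixed $A\in\trs(\M)$ against $\varphi$ to a single emptiness test for a tuple acceptor built from $\varphi$, using each family of good-class conditions exactly once. Concretely, I would first note that a satisfying valuation is essentially a tuple of paths: the quantifier prefix attaches to each $\pi_i$ its source $p_i=\pi_i^\first$, its target $q_i=\pi_i^\last$, its input $u_i=\inp(\pi_i)$ and its output $v_i=\out(\pi_i)$, so once the paths $\nu(\pi_1),\dots,\nu(\pi_n)$ are fixed every other variable is determined. Hence $A\models\varphi$ iff some tuple $(\rho_1,\dots,\rho_n)\in\paths(A)^n$ satisfies the quantifier-free constraint $\C^\star$ obtained from $\C$ by substituting each output variable $v_i$ with the path variable $\pi_i$ inside every term (assuming, w.l.o.g., the $v_i$ pairwise distinct, so that the terms become terms over $\{\pi_1,\dots,\pi_n\}$ as in \ref{goodout}), and by adding, for each state or input variable shared between two declarations, the corresponding equality atom enforcing consistency of the two bindings (e.g. a state variable used both as $p_i$ and as $q_j$ contributes $\pi_i^\first=\pi_j^\last$). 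Model-checking thus becomes: is $\{(\rho_1,\dots,\rho_n)\in\paths(A)^n\mid (\rho_1,\dots,\rho_n)\models\C^\star\}$ non-empty?

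Next I would construct a tuple acceptor $M_\varphi\in\C|_n$ recognising exactly this set, by induction on $\C^\star$ put in negation normal form. Each literal denotes a set of path tuples covered by one condition: the state atoms ($\init,\final,{=}$), path equalities, input atoms ($\prefix$, length comparison, membership) and all their negations denote \emph{regular} sets of path tuples, hence are realised via condition \ref{goodreg} (regular sets are closed under complement, so negated regular atoms are still regular); a positive output atom $p(t_1,\dots,t_k)$ is realised via \ref{goodout}, its negation via \ref{goodneg}. Conjunctions and disjunctions are realised by closure under intersection and union, condition \ref{goodclos}. Effectiveness of the class finally makes $\sem{M_\varphi}\neq\varnothing$ decidable, which is exactly $A\models\varphi$. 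For the weakly-good / $\logic^+[\O]$ case the negation normal form of a $\logic^+[\O]$-formula has every output atom occurring positively, so the construction never invokes \ref{goodneg}; all remaining atoms (including negated state, path and input atoms) still go through \ref{goodreg} and \ref{goodclos}, which a weakly good class retains, so a weakly good class suffices.

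The step I expect to be the main obstacle is the leaf case for the \emph{input} atoms. The regularity notion compares paths position-by-position through the convolution $\rho_i\conv\rho_j$, whereas an input constraint such as $\inp(\rho_i)\prefix\inp(\rho_j)$ compares the \emph{input projections}, whose letters may sit at arbitrarily mismatched positions of the convolution because a path can interleave unboundedly many $\varepsilon$-labelled transitions; a naive position-wise test then needs an unbounded buffer and fails. One genuinely has to realign the inputs first, for instance by normalising $A$ so that it is free of $\varepsilon$-transitions, after which input letter $k$ always occurs at path-position $k$ and the comparisons $\prefix$, $|\cdot|\le|\cdot|$ and membership in a regular $L$ become position-wise and hence recognisable over the convolution, so that \ref{goodreg} applies. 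Verifying that this normalisation preserves the set of satisfying valuations (and correctly handles the shared-input-variable consistency atoms) is the technically delicate point; the remainder of the argument is routine bookkeeping.
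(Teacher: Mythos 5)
Your overall skeleton is exactly the paper's proof: put $\C$ in negation normal form; observe that once the paths $\nu(\pi_1),\dots,\nu(\pi_n)$ are fixed all other variables are determined, so satisfaction reduces to non-emptiness of a set of path tuples; realise each literal by a tuple acceptor --- condition~\ref{goodreg} for the state, path and input atoms and their negations, conditions~\ref{goodout} and~\ref{goodneg} for the output atoms; combine by union and intersection via condition~\ref{goodclos}; and decide emptiness by effectiveness. Your remark that the $\logic^+[\O]$ case never invokes~\ref{goodneg}, so a weakly good class suffices, is also exactly the paper's argument. Where you diverge is the leaf case for the input atoms. The paper does \emph{not} normalise $A$: it argues that the sets of path tuples satisfying $u \prefix u'$, $u\in L$, $|u|\leq|u'|$ (and their negations, and the equalities induced by shared input or state variables) are regular as they stand, ``thanks to the way paths are overlapped by the definition of convolution'' --- the encodings of the paths are superposed block by block, so these constraints are checked position-wise on the convolution, and~\ref{goodreg} applies with no preprocessing of the model.

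Your substitute step --- eliminating $\varepsilon$-transitions from $A$ before building the acceptors --- is not merely ``technically delicate,'' it is unsound for this problem, because a pattern formula expresses a \emph{structural} property of the given automaton, not a property of its relation $R(A)$: it counts and distinguishes paths ($\pi\neq\pi'$), inspects their endpoint states, and so on, none of which is preserved by $\varepsilon$-removal. Concretely, let $A$ have a single state $q$, initial and final, with one $\varepsilon$-labelled self-loop. On the empty input $A$ has infinitely many accepting paths, so it satisfies the formula of Example~\ref{ex:logicex} (non-$(k-1)$-ambiguity) for every $k$; any $\varepsilon$-free normalisation $A'$ has a single accepting path on that input and falsifies the same formula. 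Hence $A\models\varphi$ while $A'\not\models\varphi$, and your reduction gives the wrong answer. Your underlying observation does have substance --- under the transition-synchronous convolution, $\varepsilon$-transitions desynchronise the input projections of distinct paths, which is exactly the point the paper's parenthetical remark is addressing --- but the repair has to live inside the encoding or the acceptor (realigning the components, as the convolution-based constructions do), never in a modification of $A$ itself. As written, your proof establishes the theorem only for $\varepsilon$-free automata, which is strictly weaker than the statement.
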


\begin{proof}[sketch]
    First, the formula is put in negation normal form: negation is
    pushed down to the atoms. Then, given an automaton with outputs in $\M$, we show that any
    tuple of paths which satisfy state, input and path predicates and
    their negations is a regular set of
    path tuples (this is doable even for input equality as well as input
    length comparison thanks to the way paths are overlapped by the definition of
    convolution). By condition \ref{goodreg}, these sets of tuples are
    accepted by acceptors of $\C$. By conditions \ref{goodout} and
    \ref{goodneg}, tuples of paths satisfying output predicates and
    their negations are also accepted by acceptors of $\C$. Then, the
    closure properties (condition \ref{goodclos}) allows us to
    construct an acceptor for the tuples of paths satisfying the whole
    formula inductively.$\hfill\square$
\end{proof}

	\section{A pattern logic for finite automata}\label{sec:regular}

Finite automata can be seen as automata with outputs in a trivial monoid (with a single element). As the monoid is trivial, there is no need for predicates over it and so we specialize our pattern logic into $\logic_\nfa=\logic[\varnothing]$.


\begin{definition}[Pattern logic for NFA]
	The logic $\logic_\text{NFA}$ is the set of formulas 
	$$
		\begin{array}{lllllll}
			\varphi \Coloneqq &
				\exists \pi_1\colon p_1\xrightarrow{u_1} q_1,
				\dots,
				\exists \pi_n\colon p_n\xrightarrow{u_n} q_n,
				\C
			\smallskip\\
			\C \Coloneqq &
				\lnot\C \mid
				\C \lor \C \mid
				u \prefix u' \mid
				u \in L \mid
				|u| \leq |u'| \mid
				\init(q) \mid
				\final(q) \mid
				q = q' \mid
				\pi=\pi'
		\end{array}
	$$
	where for all $i\neq j$, $\pi_i\neq \pi_j$, $L$ is a regular
	language over $\Lambda$ (assumed to be represented as an NFA),
	$u,u'\in\{u_1,\dots,u_n\}$, $q,q'\in\{q_1,\dots,q_n\}$ and
	$\pi,\pi'\in\{\pi_1,\dots,\pi_n\}$.
\end{definition}

As a yardstick to measure the expressiveness of $\logic_\nfa$, we have
considered the structural properties of NFA studied in two classical
papers: \cite{DBLP:journals/tcs/WeberS91} by Weber and Seidl and in~\cite{DBLP:journals/ijfcs/AllauzenMR11} by
Allauzen et al. The authors of these two papers give {\sc PTime} membership
algorithms for $k$-ambiguity, finite ambiguity, polynomial ambiguity
and exponential ambiguity (with as applications the approximation of
the entropy of probabilistic automata for example). We refer the
interested readers to these papers for the formal definitions of those
classes. The solutions to these membership problems follow a recurrent
schema: one defines $(1)$ a pattern that identifies the members of the
class and $(2)$ an algorithm to decide if an automaton satisfies the
pattern. The next theorem states that all these membership
problems can be reduced to the model-checking problem of $\logic_\nfa$
using a constant space reduction. The proof of this theorem is
obtained by showing how the patterns identified in~\cite{DBLP:journals/tcs/WeberS91},
can be succinctly and naturally encoded into (fixed) $\logic_\nfa$
formulas. As a corollary, we get that all the class membership
problems are in  {\sc NLogSpace}, using a model-checking algorithm that we
defined below for $\logic_\nfa$.

\begin{theorem}\label{thm:propNFA}
\label{thm:expressLNFA}
The membership problem to the subclasses of $k$-ambiguous, finitely ambiguous, polynomially
ambiguous and exponentially ambiguous NFA can be reduced to the
model-checking problem of $\logic_\nfa$ with constant space
reduction. The obtained formulas are constant (for fixed
$k$). 
\end{theorem}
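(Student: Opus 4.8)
The plan is to encode, for each of the four subclasses, the structural characterisation of Weber and Seidl~\cite{DBLP:journals/tcs/WeberS91} as a \emph{fixed} formula of $\logic_\nfa$, and then to observe that outputting an automaton together with a fixed formula costs only constant working space. Recall the two patterns on which everything rests. The \textbf{EDA} pattern requires a state $q$ carrying two \emph{distinct} cycles $q\xrightarrow{u}q$ reading the same nonempty word $u$; by~\cite{DBLP:journals/tcs/WeberS91} a trim NFA is exponentially ambiguous iff it contains EDA, and (by the polynomial-vs-exponential dichotomy) polynomially ambiguous iff it does not. The \textbf{IDA} pattern requires two distinct states $q_1\neq q_2$ and a word $u$ with paths $q_1\xrightarrow{u}q_1$, $q_1\xrightarrow{u}q_2$ and $q_2\xrightarrow{u}q_2$; by~\cite{DBLP:journals/tcs/WeberS91} a trim NFA is finitely ambiguous iff it does not contain IDA. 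Finally, $k$-ambiguity is the negation of the existence of $k+1$ pairwise distinct accepting paths on a common input, already sketched in Example~\ref{ex:logicex}.

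The second step writes these patterns as formulas, exploiting that state and input variables may be shared across several path quantifiers (only the $\pi_i$ must be pairwise distinct). Since~\cite{DBLP:journals/tcs/WeberS91} assumes trimness, I would enforce the \emph{usefulness} of each pattern state directly inside the formula by two extra existential paths, one from an initial state and one to a final state, keeping the reduction applicable to arbitrary NFA. Concretely, EDA becomes
$$
\varphi_{\mathsf{EDA}}\ =\ \exists \pi_1\colon q\xrightarrow{u}q,\ \exists \pi_2\colon q\xrightarrow{u}q,\ \exists \pi_3\colon p\xrightarrow{u'}q,\ \exists \pi_4\colon q\xrightarrow{u''}r,\ \pi_1\neq\pi_2\wedge u\in\Lambda^+\wedge\init(p)\wedge\final(r),
$$
where sharing the input variable $u$ between $\pi_1,\pi_2$ forces the two cycles to read the same word, $\pi_1\neq\pi_2$ makes them distinct, and $u\in\Lambda^+$ (a constant-size NFA) forbids degenerate $\varepsilon$-loops. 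The IDA formula $\varphi_{\mathsf{IDA}}$ is analogous, with a fresh state variable $q_2$, a third cycle, the atom $q_1\neq q_2$, and in/out paths anchored at $q_1$ and $q_2$; non-$k$-ambiguity is the formula of Example~\ref{ex:logicex} with $k+1$ path quantifiers.

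The third step is the correctness argument, where the real work lies. For EDA one checks that, whenever $q$ is useful and carries two distinct $u$-cycles, reading $\inp(\pi_3)\,u^{m}\,\inp(\pi_4)$ and choosing one of the two cycles at each of the $m$ iterations produces $2^m$ distinct accepting paths on a single input, whence exponential ambiguity; the converse direction, and the analogous statement for IDA (which yields linear, hence unbounded, ambiguity), are exactly the equivalences of~\cite{DBLP:journals/tcs/WeberS91}, so this step reduces to matching the formulas to their patterns. Consequently $A$ is exponentially ambiguous iff $A\models\varphi_{\mathsf{EDA}}$, polynomially ambiguous iff $A\not\models\varphi_{\mathsf{EDA}}$, finitely ambiguous iff $A\not\models\varphi_{\mathsf{IDA}}$, and $k$-ambiguous iff $A$ does not satisfy the $(k+1)$-path formula.

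Finally, every formula above is independent of $A$ and of constant size once $k$ is fixed, so the reduction merely copies $A$ and appends a fixed formula, using constant working space; for the three classes characterised by the \emph{absence} of a pattern the membership answer is the complement of the model-checking answer, which is harmless for the announced {\sc NLogSpace} bound since {\sc NLogSpace} is closed under complement. I expect the main obstacle to be precisely this correctness step: enforcing usefulness inside the purely existential logic without an explicit trimming pass, and ruling out spurious $\varepsilon$-cycles (via $u\in\Lambda^+$, possibly together with an $\varepsilon$-removal preprocessing) so that the encoded patterns coincide exactly with those of~\cite{DBLP:journals/tcs/WeberS91}.
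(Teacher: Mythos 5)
Your proposal is correct and follows essentially the same route as the paper: its proof likewise copies the NFA in constant space and model-checks fixed $\logic_\nfa$ formulas encoding the Weber--Seidl patterns, using the $(k{+}1)$-path formula of Example~\ref{ex:logicex} for $k$-ambiguity and, for polynomial ambiguity, exactly your $\varphi_{\mathsf{EDA}}$-style formula with usefulness anchored inside the formula by an extra path from an initial state and an extra path to a final state ($\init(q_0)\wedge\pi_1\neq\pi_2\wedge\final(q)$). Your only additions --- the guard $u\in\Lambda^+$ against degenerate $\varepsilon$-cycles and the explicit observation that answering by complement is harmless because \textsc{NLogSpace} is closed under complement --- are refinements of points the paper leaves implicit.
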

\begin{proof}
For each membership problem, our reduction copies (in constant space)
the NFA and considers the model-checking for this NFA against a fixed
$\logic_\nfa$ (one for each class).  As illustration, $k$-ambiguity has already been
expressed in Example~\ref{ex:logicex}. %
As a second example,  an automaton is not polynomially ambiguous iff there exists a state
$p$ which is reachable from an initial state, and the source of two
different cycles labelled identically by a word $v$. With
$\logic_\nfa$ this gives:
	$\exists \pi_0  \colon q_0 \xrightarrow{u_1} p,
		\exists \pi_1 \colon p\xrightarrow{u_2} p,
		\exists \pi_2\colon p\xrightarrow{u_2} p,
		\exists \pi_3\colon p\xrightarrow{u_3} q,
		\init(q_0) {\land} \pi_1\neq \pi_2 {\land} \final(q)\hfill\square$
\end{proof}

\noindent The model-checking problem asks if a given NFA $A$ satisfies a given $\logic_\text{NFA}$-formula
$\varphi$.

\begin{theorem}\label{thm:mainNFA}
	The model-checking problem of NFA against formulas in $\logic_\nfa$ is \textsc{PSpace-C}.
	It is in \textsc{NLogSpace-C} if the formula is fixed.
\end{theorem}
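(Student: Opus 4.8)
The plan is to prove the upper and lower bounds separately, deriving the upper bounds by making the generic construction behind Theorem~\ref{thm:algo} effective and size-aware. For $\logic_\nfa$ the output monoid is trivial, so I would take as good class the NFA reading convolutions $\bigconv_{i=1}^n \pi_i$ of $n$-tuples of paths of the input NFA $A$. First I put $\varphi$ in negation normal form, pushing negations onto the atoms (licit since $\logic_\nfa$ has full Boolean combinations). Following the proof sketch of Theorem~\ref{thm:algo}, every literal $\ell$ of the quantifier-free part $\C$ — an atom or a negated atom — denotes a regular set of path-tuples recognised by an NFA $M_\ell$ over the convolution alphabet: the $\bot$-padding of the convolution aligns the components position by position, so $u\prefix u'$, $|u|\le|u'|$, $u=u'$ and the state/path atoms ($\init$, $\final$, $q=q'$, $\pi=\pi'$) only inspect the endpoints and the component-wise letters, and are thus recognised by NFA of size polynomial in $\dim(A)$; a membership atom $u_i\in L$ uses the NFA for $L$ and its negation the complement automaton. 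Intersecting all the $M_\ell$ according to the Boolean structure of $\C$, together with the automaton checking that each of the $n$ components is a genuine path of $A$, yields an acceptor $B$ with $\sem{B}\ne\varnothing$ iff $A\models\varphi$.

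The point is that $B$ need not be built explicitly. A macro-state of $B$ records the current state of each of the $n$ path components (an element of $(Q\cup\{\bot\})^n$) together with, for each literal, bounded bookkeeping; to evaluate $\C$ exactly I would determinise each membership literal on the fly, storing a subset of the states of its NFA, while the remaining comparison, state and path literals are deterministic functions of the convolution computable with bounded memory. Such a macro-state is describable in space polynomial in $|\varphi|$ and $\dim(A)$, and the transition relation of $B$ is computable from these descriptions. Hence $\sem{B}\ne\varnothing$, i.e. the existence of a reachable accepting macro-state, can be decided by a nondeterministic procedure that guesses the convolution letter by letter while maintaining only the current macro-state; this runs in nondeterministic polynomial space, so in \textsc{PSpace} by Savitch. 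When $\varphi$ is fixed, $n$, the number of literals and all the languages $L$ are constants, so a macro-state reduces to a constant-length tuple of states of $A$ and fits in $O(\log\dim(A))$ bits; $B$ then has polynomially many macro-states and the reachability test is the standard \textsc{NLogSpace} graph-reachability, giving the \textsc{NLogSpace} upper bound.

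For the matching lower bounds I would reduce known complete problems to model-checking. For \textsc{PSpace}-hardness I reduce from nonemptiness of the intersection of a list of DFA $D_1,\dots,D_k$ (a \textsc{PSpace}-complete problem): let $A$ be the one-state NFA that is initial and final with a self-loop on every letter, so $A$ accepts $\Lambda^*$, and take $\varphi=\exists\pi\colon p\xrightarrow{u}q,\ \init(p)\land\final(q)\land\bigwedge_{i=1}^k u\in L(D_i)$, a formula of size polynomial in the $D_i$; then $A\models\varphi$ iff $\bigcap_i L(D_i)\ne\varnothing$. For \textsc{NLogSpace}-hardness with a fixed formula I reduce from directed reachability (\textsc{STCON}): given $(G,s,t)$, build the NFA $A$ whose states are the vertices, with $I=\{s\}$, $F=\{t\}$ and a transition on a fixed letter for each edge, and use the fixed formula $\exists\pi\colon p\xrightarrow{u}q,\ \init(p)\land\final(q)$; then $A\models\varphi$ iff $t$ is reachable from $s$. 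Both reductions use logarithmic (indeed constant) space.

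The main obstacle is not the construction of the atomic acceptors but keeping the whole algorithm within the claimed resources: the acceptor $B$ is exponential in $n$ through the component product and, for unrestricted formulas, in the sizes of the complemented membership languages, so an explicit product would only give \textsc{ExpSpace}. The key is therefore to traverse $B$ on the fly, exploiting \textsc{NPSpace}$=$\textsc{PSpace} and the fact that each macro-state admits a polynomial-size succinct description; verifying that determinising only the membership literals (rather than the whole $B$) suffices to evaluate $\C$, and that for a fixed formula every such description collapses to logarithmic size, is precisely what turns the decidability of Theorem~\ref{thm:algo} into the tight bounds stated here.
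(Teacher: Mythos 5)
Your proposal is correct, and its upper-bound half is essentially the paper's own argument: the paper likewise notes that the acceptor produced by the generic construction behind Theorem~\ref{thm:algo} is exponentially large in general (polynomial when the formula is fixed) and avoids materialising it by searching a non-emptiness witness nondeterministically on the fly; your macro-state formulation --- a tuple of current component states plus on-the-fly subset construction for the (possibly negated) membership atoms, closed off by \textsc{NPSpace}$=$\textsc{PSpace}, respectively by plain \textsc{NLogSpace} reachability when $\varphi$ is fixed --- is the same idea made a little more explicit. Where you genuinely diverge is the \textsc{PSpace}-hardness reduction. The paper also starts from non-emptiness of the intersection of $n$ DFA, but it encodes the DFA into the \emph{model}: it takes their disjoint union as the NFA $A$ and uses the formula asserting the existence of $n$ pairwise distinct accepting paths over the same input word; since each component is deterministic, $n$ distinct paths force one accepting path per DFA, i.e.\ a word in $\bigcap_i L(D_i)$. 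You instead encode the DFA into the \emph{formula}, model-checking a trivial one-state automaton accepting $\Lambda^*$ against $\bigwedge_{i=1}^k u\in L(D_i)$; this is legitimate, because the paper's size convention explicitly counts the states of the NFA representing membership constraints, and it is arguably simpler. The trade-off is informative: the paper's reduction shows hardness already for the fragment using only path quantifiers, $\init$/$\final$ and path disequality --- so hardness is not an artifact of embedding automata inside membership predicates --- whereas your reduction localises all of the hardness in the $u\in L$ atoms. Finally, your explicit \textsc{NLogSpace}-hardness via directed reachability with the fixed formula $\exists\pi\colon p\xrightarrow{u}q,\ \init(p)\land\final(q)$ is correct and fills in a detail that the paper's sketch (which only spells out the \textsc{PSpace} lower bound) leaves implicit.
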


\begin{proof}[sketch]
    We use NFA as acceptors for tuples of paths. The algorithm
    presented in the proof of Theorem~\ref{thm:algo} yields an
    exponentially large NFA (and polynomial if the formula is
    fixed). We show that it does not need to be constructed explicitly
    and that a short non-emptiness witness can be searched
    non-deterministically on-the-fly. For \textsc{PSpace}-hardness, we
    notice that the non-emptiness of the intersection of $n$ DFA can
    be easily expressed in $\logic_\nfa$, by seeing the $n$ DFA as
    a disjoint union, and by asking for the existence of
    $n$ different accepting paths over the same input in this union.$\hfill\square$
\end{proof}


\begin{corollary}[of Theorems~\ref{thm:propNFA} and \ref{thm:mainNFA}]\label{cor:propNFA}
The membership problem to the classes of $k$-ambiguous, finitely ambiguous, polynomially
ambiguous and exponentially ambiguous NFA is in \textsc{NLogSpace}.
\end{corollary}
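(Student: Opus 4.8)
The plan is to simply chain the two preceding theorems: Theorem~\ref{thm:propNFA} turns each membership question into a model-checking instance for a \emph{fixed} $\logic_\nfa$ formula, and the fixed-formula case of Theorem~\ref{thm:mainNFA} solves the latter in \textsc{NLogSpace}. The only genuine thing to check is that prepending a constant-space reduction to an \textsc{NLogSpace} procedure stays within \textsc{NLogSpace}.

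First I would fix, for each of the four classes $C$ (with $k$ fixed in the $k$-ambiguity case), the formula $\varphi_C \in \logic_\nfa$ exhibited by Theorem~\ref{thm:propNFA}, together with its associated constant-space reduction $A \mapsto A_C$ satisfying $A \in C$ iff $A_C \models \varphi_C$. Since this reduction merely copies $A$ (possibly as a disjoint union of bounded multiplicity, or with a relabelling of outputs), the produced automaton $A_C$ has size $\O(\dim(A))$, so that $\log \dim(A_C) = \O(\log \dim(A))$; this bound is what lets log-space measured in $\dim(A_C)$ translate back to log-space in $\dim(A)$.

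Next I would invoke the fixed-formula case of Theorem~\ref{thm:mainNFA}: deciding $A_C \models \varphi_C$ is in \textsc{NLogSpace} in $\dim(A_C)$. The membership test for $C$ is therefore the composition of the constant-space map $A \mapsto A_C$ with this \textsc{NLogSpace} model-checker. Because \textsc{NLogSpace} is closed under log-space many-one reductions, and a constant-space reduction with polynomial-size (here linear) output is a fortiori a log-space reduction, the composition lies in \textsc{NLogSpace}: one runs the nondeterministic model-checking algorithm on $A_C$, never storing $A_C$ explicitly but instead recomputing on demand each queried symbol of $A_C$ directly from $A$ using only constant extra space.

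The argument is essentially bookkeeping, so I do not expect a real obstacle; the single point requiring care is precisely this composition, namely ensuring that the intermediate automaton $A_C$ is regenerated symbol-by-symbol whenever the second machine accesses it, so that the total working space stays logarithmic in $\dim(A)$ and the nondeterminism of the model-checker is not disturbed by the reduction.
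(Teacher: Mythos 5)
Your proposal is correct and follows exactly the route the paper intends: the corollary is obtained by composing the constant-space reductions of Theorem~\ref{thm:propNFA} (which yield fixed $\logic_\nfa$ formulas) with the fixed-formula \textsc{NLogSpace} case of Theorem~\ref{thm:mainNFA}, using the standard closure of \textsc{NLogSpace} under such reductions via symbol-by-symbol recomputation of the intermediate automaton. The paper leaves this composition implicit, and your careful treatment of it is the only content the corollary's proof requires.
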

	\vspace{-5mm}
\section{A pattern logic for transducers}\label{sec:transducers}

Transducers are automata with outputs in a free monoid $\M_{Trans} = (\Gamma^*,\cdot,\varepsilon)$ and therefore define subsets of  $\Lambda^*\times \Gamma^*$.
Since our general pattern logic can test for output equalities (by
repeating twice an output variable in the prefix), the
model-checking is easily shown to be undecidable by encoding PCP:
\begin{theorem}\label{thm:undec}
    The model-checking problem of transducers against formulas in
    $\logic[\varnothing]$ is undecidable. 
\end{theorem}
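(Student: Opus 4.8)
The plan is to reduce from the Post Correspondence Problem (PCP), which is undecidable. Recall that a PCP instance is a finite list of pairs $(u_1,v_1),\dots,(u_n,v_n)$ of words over some alphabet $\Gamma$, asking whether there is a non-empty index sequence $i_1\cdots i_k$ (with $k\geq 1$) such that $u_{i_1}\cdots u_{i_k} = v_{i_1}\cdots v_{i_k}$. From such an instance I would build the transducer $T$ over input alphabet $\{1,\dots,n\}$ and output alphabet $\Gamma$ with exactly two states $q_1,q_2$, both initial and final, where $q_1$ carries a self-loop reading $i$ and outputting $u_i$, and $q_2$ a self-loop reading $i$ and outputting $v_i$, for each $i\in\{1,\dots,n\}$. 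Since $T$ is input-deterministic from each state, the unique accepting path starting in $q_1$ on input $w=i_1\cdots i_k$ outputs exactly $u_{i_1}\cdots u_{i_k}$, and the one starting in $q_2$ on the same input outputs $v_{i_1}\cdots v_{i_k}$.

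The crucial observation is that although $\O=\varnothing$ forbids any output predicate, \emph{equality} of outputs is still expressible by reusing a single output variable across two path quantifiers: writing $\exists\pi_1\colon p_1\xrightarrow{a_1\mid v}r_1$ and $\exists\pi_2\colon p_2\xrightarrow{a_2\mid v}r_2$ forces $\out(\pi_1)=v=\out(\pi_2)$. I would therefore take the formula
\[
\varphi\ =\ \exists\pi_1\colon p_1\xrightarrow{a_1\mid v}r_1,\ \exists\pi_2\colon p_2\xrightarrow{a_2\mid v}r_2,\ \C
\]
where $\C = \init(p_1)\land\final(r_1)\land\init(p_2)\land\final(r_2)\land\lnot(p_1=p_2)\land a_1\prefix a_2\land a_2\prefix a_1\land a_1\in\{1,\dots,n\}^+$. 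Here $a_1\prefix a_2\land a_2\prefix a_1$ enforces $a_1=a_2$ (the two paths read the same input), the membership atom $a_1\in\{1,\dots,n\}^+$ enforces a non-empty input, and $\lnot(p_1=p_2)$ together with $T$ having only two states forces $\{p_1,p_2\}=\{q_1,q_2\}$. This $\varphi$ lies in $\logic[\varnothing]$, as no output predicate is used.

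Correctness follows directly: the constraints pin down a non-empty word $w=\inp(\pi_1)=\inp(\pi_2)$ with $\pi_1$ looping in $q_1$ and $\pi_2$ in $q_2$, so $\out(\pi_1)=u_{i_1}\cdots u_{i_k}$ and $\out(\pi_2)=v_{i_1}\cdots v_{i_k}$, while the shared variable $v$ imposes their equality. Hence $T\models\varphi$ iff the PCP instance admits a solution, and since $T$ and $\varphi$ are computable from the instance, undecidability of PCP transfers to the model-checking problem.

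The main subtlety, rather than an obstacle, is recognising that output equality, which looks as though it should require an output predicate, is in fact available for free through variable sharing in the quantifier prefix; this is precisely the feature that makes $\logic[\varnothing]$ too expressive over a free monoid. The remaining ingredients (forcing distinct source states via $\lnot(p_1=p_2)$ on a two-state machine, and excluding the empty solution via a membership atom) are routine once this encoding is in place.
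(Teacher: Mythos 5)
Your proof is correct and matches the paper's own argument: the paper proves Theorem~\ref{thm:undec} by exactly this PCP encoding, using a two-state transducer with self-loops $i\mid u_i$ and $i\mid v_i$ on two initial-and-final states, and exploiting the fact that repeating an output variable in the quantifier prefix tests output equality without any output predicate. All the supporting details you supply (input equality via two prefix atoms, non-emptiness via a membership atom, distinctness of the source states via $\lnot(p_1=p_2)$) are sound.
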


To obtain a decidable logic for transducers, we need to exclude equality tests on the output words in the logic.
However, as we will see, we can instead have inequality test $\neq$ as
long as it is not under an odd number of negations in the formula. We
also allow to test (non) membership of output word concatenations to a
regular language, as well as comparison of output word
concatenations wrt their length. Formally:
\begin{definition}[Pattern logic for transducers]
The logic $\logic_{\trans}$ is the set of formulas of the form
$$
	\begin{array}{lllllll}
	\varphi & \Coloneqq &
		\exists \pi_1\colon p_1\xrightarrow{u_1\mid v_1} q_1,
		\dots,
		\exists \pi_n\colon p_n\xrightarrow{u_n\mid v_n} q_n,
		\C
	\smallskip\\
	\C & \Coloneqq &
		\lnot \C \mid
		\C \lor \C \mid
		u \prefix u' \mid
		u \in L \mid
		|u| \leq |u'|\mid
		\init(q) \mid
		\final(q) \mid
		q = q' \mid
		\pi=\pi' |
	\\
		&&
		t \not\prefix t' \mid
		t \in N \mid
		|t| \leq |t'|
\end{array}
$$
where for all $1\leq i<j\leq n$, $\pi_i\neq \pi_j$ and $v_i\neq v_j$
(no implicit output equality tests), $L$ (resp. $N$) is a regular
language over $\Lambda$ (resp. $\Gamma$), assumed to be represented as an NFA,
$u,u'\in\{u_1,\dots,u_n\}$, $q,q'\in\{q_1,\dots,q_n\}$,
$t,t'\in Terms(\{v_1,\dots,v_n\},\cdot,\epsilon)$, 
$\pi,\pi'\in\{\pi_1,\dots,\pi_n\}$, and $t\not\prefix t'$ does not
occur under an odd number of negations. 
\end{definition}

We define the macros
$t\neq t' \new t\not\prefix t' \lor  t'\not\prefix t$,
$\text{mismatch}(t,t') \new t\not\prefix t' \land  t'\not\prefix t$ and
$$
\text{SDel}_{\neq}(t_1,t'_1,t_2,t'_2)\new (|t'_1|\neq |t'_2|)\ \lor[\
t'_1t'_2\neq \epsilon\land \text{mismatch}(t_1,t_2)]
$$
Let us explain the latter macro. Many properties of transducers are based on the notion of
output delays, by which to compare output words. Formally, for any two
words $v_1,v_2$, $\text{delay}(v_1,v_2) = (\alpha_1,\alpha_2)$ such that
$v_1 = \ell \alpha_1$ and $v_2=\ell \alpha_2$ where $\ell$ is the longest
common prefix of $v_1$ and $v_2$. It can be seen that for any words
$v_1,v'_1,v_2,v'_2$, if we have
$\text{SDel}_{\neq}(v_1,v'_1,v_2,v'_2)$, then $\text{delay}(v_1,v_2)\neq \text{delay}(v_1v'_1,v_2v'_2)$, but
the converse does not hold. But, if $\text{delay}(v_1,v_2)\neq
\text{delay}(v_1v'_1,v_2v'_2)$, then
$\text{SDel}_{\neq}(v_1(v'_1)^i,v'_1,v_2(v'_2)^i,v'_2)$ holds for some
$i\geq 0$. These two facts allows us to express all the known
transducer properties from the literature relying on the notion of
delays. We leave however as open whether our
logic can express a constraint such as $\text{delay}(v_1,v_2)\neq
\text{delay}(v_3,v_4)$.


We review here some of the main transducer subclasses studied in the
literature. We refer the reader to the mentioned references for the
formal definitions. As for the NFA subclasses of the previous section,
deciding them usually goes in two steps: $(1)$ identify a structural
pattern characterising the property, $(2)$ decide whether such as
pattern is satisfied by a given transducer. 
The class of
determinisable transducers are the transducers which define sequential
functions~\cite{DBLP:journals/tcs/Choffrut77,BealCPS03,DBLP:journals/iandc/WeberK95}. The $k$-sequential transducers are the
transducers defining unions of (graphs) of $k$ sequential
functions~\cite{DBLP:conf/fossacs/DaviaudJRV17}. The multi-sequential ones are the union of all
$k$-sequential transducers for all $k$~\cite{DBLP:journals/ijfcs/JeckerF18,ChoffrutDecomposition}. Finally, the $k$-valued
transducers are the transducers for which any input word has at most
$k$ output words~\cite{GurIba83,journals/mst/SakarovitchS10}, and the
finite-valued ones are all the
$k$-valued transducers for all $k$~\cite{DBLP:journals/acta/Weber89,SICOMP::Weber1993,DBLP:conf/mfcs/SakarovitchS08}. 
All these classes, according to the given references,
are decidable in {\sc PTime}.

\begin{theorem}\label{thm:proptrans}
	\label{thm:expresstrans}
    The membership problem of transducers to the classes of determinisable,
    functional, $k$-sequential, multi-sequential, $k$-valued,
    and finite-valued transducers can be reduced to the
    model-checking problem of $\logic_\trans$ with a constant space
    reduction. The obtained formulas are constant (as
    long as $k$ is fixed). 
\end{theorem}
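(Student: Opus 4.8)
The plan is to prove Theorem~\ref{thm:expresstrans} by exhibiting, for each of the six transducer subclasses, a fixed $\logic_\trans$-formula whose satisfaction by a transducer $T$ characterises membership in (the complement of) that class, together with the observation that the reduction only needs to copy $T$ and attach the precomputed formula, hence uses constant space. Since $\logic_\trans$ formulas express the \emph{existence} of a witnessing pattern, for each class I would express the \emph{negation} of membership: the formula holds iff $T$ is \emph{not} in the class. The key technical ingredient throughout is the macro $\text{SDel}_{\neq}$ and the two facts stated after its definition, which let me replace semantic statements about $\text{delay}(v_1,v_2)\neq\text{delay}(v_1v_1',v_2v_2')$ growing unboundedly by a purely syntactic constraint on output variables that can be placed under an even number of negations (so that $\not\prefix$ never appears negated).

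First I would handle the functionality and $k$-valuedness cases, which are the easiest: $k$-valuedness was essentially already done in the Introduction and Example~\ref{ex:logicex}, and the formula there---$n$ accepting paths over a common input $u$ with pairwise distinct outputs $v_i\neq v_j$---directly instantiates into $\logic_\trans$ since $v_i\neq v_j$ expands via the defined macro $t\neq t'\new t\not\prefix t'\lor t'\not\prefix t$, which keeps $\not\prefix$ positive. Functionality is the special case $k=1$. Then I would turn to determinisability, which is characterised by the twinning property (see the figure \drawTwinning): $T$ is \emph{not} determinisable iff there exist two accepting runs over a common input of the form $uu'{}^\star u''$ such that a loop produces a growing delay. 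Using the ``only if'' direction of the $\text{SDel}_{\neq}$ facts, I would write a formula asserting the existence of paths $\pi_1,\pi_2$ with a common input decomposition $u\mid u'\mid u''$ (shared inputs enforced by $u_i\prefix u_j\land u_j\prefix u_i$, i.e.\ the derived $=$) and outputs $v,v',v''$ on each branch satisfying $\text{SDel}_{\neq}(v_1,v_1',v_2,v_2')$. The $k$-sequential and multi-sequential cases follow the same scheme but with $k+1$ (resp.\ an unbounded-but-fixed family of) branches, each pair violating the twinning-style delay condition; the finite-valued case combines the $k$-valued counting pattern with a delay-divergence pattern between two of the branches.

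The main obstacle I anticipate is \textbf{faithfully encoding the structural patterns using only the permitted syntactic primitives while respecting the parity constraint on $t\not\prefix t'$}. Concretely, the patterns in the literature (twinning, branch-twinning for $k$-sequentiality, the Lead/Delay conditions for finite-valuedness) are stated in terms of \emph{unbounded} delay divergence along pumped loops, whereas $\logic_\trans$ can only quantify over a \emph{fixed} number of paths with a fixed loop decomposition. The bridge is precisely the second $\text{SDel}_{\neq}$ fact---that whenever $\text{delay}(v_1,v_2)\neq\text{delay}(v_1v_1',v_2v_2')$ then $\text{SDel}_{\neq}(v_1(v_1')^i,v_1',v_2(v_2')^i,v_2')$ holds for some $i\ge 0$---which converts an existential statement about arbitrarily long pumpings into the existence of a single pair of finite paths whose loop has been traversed $i$ times; since the path variables range over all paths of $T$, that pumped path is available as a witness. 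I would need to check, case by case, that each delay constraint appears only positively (so $\not\prefix$ stays out of odd negation), that input-equality and loop-source constraints are expressible via $\prefix$, $=$ on states, and path inequality $\pi\neq\pi'$, and that the whole conjunction is indeed a constant-size formula independent of $T$ (and, for $k$-sequential / $k$-valued, of size depending only on the fixed $k$). Finally I would note that, since the formula is fixed and the construction of $T$'s copy is a logspace (indeed constant-space) transformation, the reduction meets the stated bound, completing the proof.$\hfill\square$
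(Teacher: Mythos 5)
Your proposal matches the paper's proof essentially exactly: the paper likewise reduces each class to model-checking a fixed formula expressing the \emph{negation} of membership, citing the $k$-valuedness formula from the introduction and encoding the twinning property (for determinisability) with the $\text{SDel}_{\neq}$ macro applied to two runs sharing input variables $u,u',u''$, with the pumped-path witness argument justifying that $\text{SDel}_{\neq}$ suffices despite only implying (not being equivalent to) delay divergence. The only cosmetic difference is that the paper enforces input equality by reusing the same input variables across the quantifier prefix rather than via $u\prefix u'\land u'\prefix u$, and, like you, it treats the remaining classes ($k$-sequential, multi-sequential, finite-valued) only by reference to their known pattern characterisations.
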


\begin{proof}
    Without going through all the properties,
    let us remind the reader that the formula for
    $k$-valuedness has been given in the introduction. We also 
    give the $\logic_\trans$ formulas for
    the class of  determinisable transducer. It is known that a
    transducer is determinisable iff it satisfies the twinning
    property, which is literally the negation of:
    $$
    \begin{array}{lr}
      \qquad \exists \pi_1:q_1\xrightarrow{u\mid v_1} p_1,\exists \pi'_1:
      p_1\xrightarrow{u'\mid v'_1} p_1,      \exists
                                           \pi''_1:p_1\xrightarrow{u''\mid
                                           v''_1} r_1,\\
\qquad \exists \pi_2:q_2\xrightarrow{u\mid v_2} p_2,\exists \pi'_2:
      p_2\xrightarrow{u'\mid v'_2} p_2,\exists \pi''_2:
      p_2\xrightarrow{u''\mid v''_2} r_2, \smallskip\\ 
      \qquad\qquad\init(q_1)\wedge \init(q_2)\wedge \final(r_1)\wedge \final(r_2)\wedge\text{SDel}_{\neq}(v_1,v'_1,v_2,v'_2) & \qquad\qquad\hfill \square
    \end{array}
        $$
\end{proof}

\begin{theorem}\label{thm:maintrans}
	The model checking of transducers against formulas in $\logic_{Trans}$ is \textsc{PSpace-C}.
	It is in \textsc{NLogSpace-C} if the formula is fixed.
\end{theorem}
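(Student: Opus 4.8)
The plan is to establish the two complexity bounds separately, reusing as much as possible the machinery already developed for $\logic_\nfa$ in the sketch of Theorem~\ref{thm:mainNFA}, while handling the new output constraints specific to transducers. The overall strategy is to instantiate Theorem~\ref{thm:algo}: I would exhibit an effective weakly good class $\C$ of tuple acceptors for the free monoid $\M_{Trans}=(\Gamma^*,\cdot,\varepsilon)$ and the output predicates $\{\not\prefix,\ \in N,\ |\cdot|\leq|\cdot|\}$, and then analyse the cost of the resulting emptiness test.

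\textbf{Upper bounds.} First I would take NFA over the convolution alphabet as the tuple acceptors, exactly as in the NFA case. The new ingredient is that each tuple of paths now carries output words $v_1,\dots,v_n$, and the output atoms $t\not\prefix t'$, $t\in N$, $|t|\leq|t'|$ must be recognised on the convolution of the paths. For the non-prefix predicate $\not\prefix$ one builds an NFA that, reading the two output components synchronously, guesses a mismatching position (a position where both words are defined but disagree, or where the shorter one ends strictly before the longer continues in a way violating the prefix relation); crucially this predicate appears only positively (it is never under an odd number of negations), so we never need to complement it, which is exactly why $\C$ need only be \emph{weakly} good. The length comparison $|t|\leq|t'|$ and regular membership $t\in N$ are recognised by NFA reading the convolution and are closed under complement, so their negations pose no difficulty. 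Since terms $t$ are concatenations of output variables, the acceptor must track the concatenation order, which is handled by sequencing the components of the convolution appropriately. Combining these component acceptors by the inductive construction of Theorem~\ref{thm:algo} gives an NFA that is exponential in $\varphi$ but polynomial when $\varphi$ is fixed; as in the NFA case, this product is never built explicitly, and a short accepting witness is searched on-the-fly, giving \textsc{PSpace} in general and \textsc{NLogSpace} when $\varphi$ is fixed. The main obstacle here is verifying that a mismatch between two concatenated output terms can be detected by a polynomial-size NFA reading the convolution on-the-fly: the position witnessing a mismatch lies at a computable offset, but the offset depends on the lengths of the preceding output blocks, so the acceptor must guess the block boundaries and the mismatch position consistently without storing the words themselves.

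\textbf{Lower bounds.} For \textsc{PSpace}-hardness I would reuse the reduction sketched for Theorem~\ref{thm:mainNFA}: the non-emptiness of the intersection of $n$ DFA is already expressible in the path/state/input fragment common to both logics, and a transducer is just an NFA with (ignorable) outputs, so the same construction applies verbatim. For \textsc{NLogSpace}-hardness with a fixed formula one reduces from graph reachability, expressed by a fixed formula asserting the existence of a single accepting path. Both hardness arguments are routine and do not touch the output constraints.

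\textbf{Matching the bounds.} Finally I would note that the \textsc{PSpace} upper bound matches the \textsc{PSpace} lower bound, and the \textsc{NLogSpace} on-the-fly search for fixed $\varphi$ matches the \textsc{NLogSpace} reachability lower bound, closing both cases. I expect the genuinely delicate step to be the correctness of the on-the-fly emptiness procedure for the output constraints, and in particular arguing that a short witness exists: the standard pumping/small-model argument must be adapted so that shortening an accepting run preserves both the input-side constraints (prefix, length, membership) and the output-side mismatch witnesses simultaneously, which requires care because pumping a path also pumps its output and could inadvertently create or destroy a mismatch.
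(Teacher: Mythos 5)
There is a genuine gap in the core of your upper-bound argument: plain NFA over the convolution alphabet cannot serve as tuple acceptors for the output constraints, and the step you yourself flag as ``the main obstacle'' is in fact insurmountable for finite-state acceptors. The convolution $\pi_1\conv\pi_2$ aligns two paths transition by transition, but the outputs of the two paths accumulate at different, unboundedly diverging rates along their transitions. To certify $\out(\nu(\pi_1))\not\prefix\out(\nu(\pi_2))$ the acceptor must check that the two guessed mismatch positions are the \emph{same} output index, i.e.\ that the output prefixes preceding them have equal lengths; likewise $|t|\leq|t'|$ is a comparison of two unbounded counts. The corresponding sets of path tuples are not regular: take a transducer in which $\pi_1$ ranges over loops emitting one output letter per transition and $\pi_2$ over loops emitting two; then the convolutions satisfying $|\out(\pi_1)|\leq|\out(\pi_2)|$ compare one loop count against twice another, and a routine pumping argument rules out any NFA. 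So condition~(\ref{goodout}) of the (weakly) good class fails for your choice of $\C$, and no guessing of ``block boundaries and the mismatch position consistently without storing the words'' can repair this---verifying that two independently guessed unbounded offsets coincide is precisely what finite memory cannot do. Your claim that $|t|\leq|t'|$ on outputs is NFA-recognisable on the convolution (and complementable there) is false for the same reason. Consequently your \textsc{PSpace}/\textsc{NLogSpace} on-the-fly analysis is carried out on a machine model that does not exist for this logic.

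The paper's proof resolves exactly this point by switching the acceptor class: it uses \emph{Parikh automata}, i.e.\ NFA extended with counters that can only be incremented, are never tested during the run, and whose final values must lie in a semilinear set (given, say, by an existential Presburger formula). The counters record output-prefix lengths up to the guessed positions (or total output lengths of the terms), and the semilinear acceptance condition imposes the required equalities and inequalities; as the paper puts it, the counters are needed because the witnessing position need not occur at the same location in the convolution encoding of the path tuple. The complexity bounds are then obtained by analysing non-emptiness of the resulting Parikh automaton (exponential in $\varphi$ in general, polynomial for fixed $\varphi$) rather than of an NFA. Your lower-bound arguments (intersection non-emptiness of $n$ DFA for \textsc{PSpace}-hardness, which indeed carries over verbatim from Theorem~\ref{thm:mainNFA} since transducers subsume NFA, and reachability for fixed formulas) do match the paper and are fine; the flaw is confined to, but fatal for, the choice of tuple-acceptor class in the upper bound.
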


\begin{proof}[sketch]
    We use Parikh automata as acceptors for tuples of paths. They extend
    automata with counters that can only be incremented and never
    tested for zero. The acceptance condition is given by a
    semi-linear set (represented for instance by an existential
    Presburger formula). The formal definition can be found e.g. in~\cite{DBLP:conf/lics/FigueiraL15}. The counters allow us to compare the output length of
    paths, or to identify some output position of two paths with
    different labels (to test $v\not\prefix v'$). The counters are
    needed because this position may not occur at the same location in
    the convolution encoding of path tuples.$\hfill\square$
\end{proof}


\begin{corollary}[of Theorems~\ref{thm:expresstrans} and~\ref{thm:maintrans}]\label{cor:proptrans}
    The membership problem of transducers to the classes of determinisable,
    functional, $k$-sequential, multi-sequential, $k$-valued,
    and finite-valued transducers (for fixed $k$) is decidable in
    \textsc{NLogSpace}.
\end{corollary}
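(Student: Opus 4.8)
The plan is to compose the two preceding results. Fix one of the listed subclasses $C$, together with a value of $k$ when the class is parametrised by $k$. By Theorem~\ref{thm:expresstrans} there is a constant-space reduction mapping any transducer $T$ to an instance of the model-checking problem for $\logic_\trans$, where the target formula $\varphi_C$ is \emph{fixed} (independent of $T$, and constant once $k$ is fixed), such that $T\in C$ if and only if the image of $T$ under the reduction satisfies $\varphi_C$. Inspecting the proof of Theorem~\ref{thm:expresstrans}, this reduction essentially copies the input transducer, so its output is a transducer of size linear in $|T|$ paired with the fixed formula $\varphi_C$.

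Next I would invoke the fixed-formula part of Theorem~\ref{thm:maintrans}: model-checking a transducer against the fixed formula $\varphi_C$ is in \textsc{NLogSpace}. It remains to combine the reduction with this model-checking procedure so that the \emph{original} membership problem (whose input is $T$ alone) is solved in \textsc{NLogSpace}. This is exactly the standard composition of a logspace (here even constant-space) many-one reduction with an \textsc{NLogSpace} decision procedure, relying on the fact that \textsc{NLogSpace} is closed under logspace reductions.

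The only point requiring care, and the step I would treat as the main (mild) obstacle, is that the reduction's output must not be materialised, since storing the full image transducer could cost linear space. Instead I would interleave the two algorithms: run the \textsc{NLogSpace} model-checking procedure on the image of $T$, and whenever it queries a symbol of its input (a state, transition, or output value of the copied transducer), recompute that symbol on the fly from $T$ using the constant additional space of the reduction. Since the reduction needs only constant space and the model-checking of the fixed $\varphi_C$ needs only logarithmic space, the combined nondeterministic procedure stays within logarithmic space, placing the membership problem for each of the classes of determinisable, functional, $k$-sequential, multi-sequential, $k$-valued, and finite-valued transducers (for fixed $k$) in \textsc{NLogSpace}. $\hfill\square$
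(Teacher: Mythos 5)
Your proposal is correct and matches the paper's (implicit) argument exactly: the corollary is obtained by composing the constant-space reduction of Theorem~\ref{thm:expresstrans} (which maps a transducer to itself paired with a fixed $\logic_\trans$ formula) with the fixed-formula \textsc{NLogSpace} model-checking bound of Theorem~\ref{thm:maintrans}, using closure of \textsc{NLogSpace} under such reductions. Your extra remark about recomputing the reduction's output on the fly rather than materialising it is the standard (and correct) way to make this composition rigorous, and involves nothing beyond what the paper intends.
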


	\section{A pattern logic for sum-automata}\label{sec:sum}

We remind the reader that sum-automata are automata with outputs in
the monoid $\M_\suma=(\Z,+,0)$ (assumed to be encoded in binary) and therefore define subsets of $\Lambda^* \times \mathbb{Z}$.  We consider in this
section two logics for expressing structural properties of
sum-automata:  the logic $\logic_\suma$ which is obtained as
$\logic[\{\leq\}]$ where the output predicate $\leq$ is interpreted by
the natural total order over integers, and a subset of this logic
$\logic_\suma^{\neq}$ obtained as $\logic^+[\{\neq\}]$
where the predicate $\neq$ never
appears in the scope of an odd number of negations (to avoid the
expressibility of the equality predicate). We show that the fragment
$\logic_\suma^{\neq}$ enjoys better complexity results.
Formally, those two logics are defined as follows:


\begin{definition}[Two pattern logics for sum-automata]
	The logic $\logic_\suma$ is the set of formulas of the form
	$$
	\begin{array}{lllllll}
	\varphi & \Coloneqq &
	\exists \pi_1\colon p_1\xrightarrow{u_1\mid v_1} q_1,
	\dots,
	\exists \pi_n\colon p_n\xrightarrow{u_n\mid v_n} q_n,
	\C
	\smallskip\\
	\C & \Coloneqq &
	\lnot \C \mid
	\C \lor \C \mid
	u \prefix u' \mid
	u \in L \mid
	|u| \leq |u'|\mid
	\init(q) \mid
	\final(q) \mid
	q = q' \mid
	\pi=\pi' |
	\\
	&&
	t \leq t'
	\end{array}
	$$
	where for all $1\leq i<j\leq n$, $\pi_i\neq \pi_j$, $L$ is a regular
	language over $\Lambda$ assumed to be represented as an NFA,
	$u,u'\in\{u_1,\dots,u_n\}$, $q,q'\in\{q_1,\dots,q_n\}$,
	$t,t'\in Terms(\{v_1,\dots,v_n\},\cdot,\epsilon)$ and
	$\pi,\pi'\in\{\pi_1,\dots,\pi_n\}$.
	
	The logic $\logic^{\neq}_\suma$ is defined as above but the
        constraint $t \leq t'$ is replaced by $t \neq t'$ and this
        constraint does not occur under an odd number of negations,
        and moreover $v_i\neq v_j$ for all $1\leq i<j\leq n$ (no implicit output equality tests).  
\end{definition}

We review here some of the main sum-automata subclasses decidable in
{\sc PTime} studied in the
literature. We refer the reader to the mentioned references for the
formal definitions. The class of {\em functional} sum-automata~\cite{DBLP:journals/corr/abs-1111-0862}
are those such that all accepting paths associated with a given word return the
same value. The classes of {\em $k$-valued}~\cite{DBLP:conf/fsttcs/FiliotGR14} and
{\em $k$-sequential} sum-automata~\cite{DBLP:conf/fossacs/DaviaudJRV17} are defined similarly as for
transducers.  

\begin{theorem}\label{thm:propsum}
\label{thm:PLsum-expr}
The membership problem of sum-automata in the class of functional,
$k$-valued, and $k$-sequential automata can be reduced to the
model-checking problem of $\logic^{\neq}_\suma$. Moreover, the
obtained $\logic^{\neq}_\suma$ formulas are constant (as long as $k$ is fixed).
\end{theorem}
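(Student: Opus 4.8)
The plan is to prove Theorem~\ref{thm:propsum} by exhibiting, for each of the three subclasses (functional, $k$-valued, and $k$-sequential), an explicit $\logic^{\neq}_\suma$ formula whose satisfaction by a sum-automaton $A$ is equivalent to $A$ \emph{not} belonging to the subclass (i.e.\ the formula expresses the negation of the property, mirroring the pattern-based characterisations used in the literature). Since each formula must be independent of $A$ and only of constant size for fixed $k$, the reduction that copies $A$ and model-checks against the fixed formula runs in constant space, which gives exactly the claimed statement.

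First I would dispose of functionality and $k$-valuedness, which are the easy cases because their patterns involve no delay-style reasoning. Functionality fails iff there exist two accepting paths $\pi_1,\pi_2$ over the same input word $u$ with distinct output values, expressed by
$$
\exists\pi_1\colon p_1\xrightarrow{u\mid v_1}q_1,\exists\pi_2\colon p_2\xrightarrow{u\mid v_2}q_2,\ \init(p_1)\wedge\final(q_1)\wedge\init(p_2)\wedge\final(q_2)\wedge v_1\neq v_2.
$$
Non-$(k\text{-valued})$ is precisely the formula $\C_0$ already written in the Introduction, namely the existence of $k+1$ pairwise-distinct accepting paths $\pi_0,\dots,\pi_k$ over a common input $u$ with pairwise-distinct output values $v_i\neq v_j$; note that each $v_i\neq v_j$ appears positively, so the formula lies in $\logic^{\neq}_\suma$ as required, and it is constant once $k$ is fixed.

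The main obstacle will be the $k$-sequential case, whose literature characterisation (from~\cite{DBLP:conf/fossacs/DaviaudJRV17}) rests on an unbounded-delay pattern analogous to the twinning property for transducers: an automaton is \emph{not} $k$-sequential iff one can find $k+1$ accepting paths sharing a common loop structure on inputs of the form $u v^\ast$ such that the integer delays between outputs grow unboundedly as the loop is pumped. The difficulty is that the sum-monoid analogue of $\text{SDel}_{\neq}$ cannot directly reference delays, so I would transport the transducer reasoning into $(\Z,+,0)$: for integer outputs, the delay between two loops is governed by the difference of the weights accumulated in the respective loops, and unbounded divergence is witnessed by a single pumping in which these loop-sums differ. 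Concretely, I would express, over $k+1$ path skeletons $q_i\xrightarrow{u\mid v_i}p_i\xrightarrow{v\mid v'_i}p_i\xrightarrow{u'\mid v''_i}r_i$ (common input prefix $u$, common loop input $v$), the disjunction over index pairs of a constraint asserting that two of the loop output sums $v'_i$ differ, phrased purely with $\neq$ kept positive. The crux is checking that this positively-occurring $\neq$-pattern is both necessary and sufficient for non-$k$-sequentiality, using the semi-linear (Presburger) structure of integer loop sums to guarantee that bounded delay is equivalent to the \emph{absence} of such a divergent pair; this monotone translation of the twinning-style argument into the ordered abelian setting, while respecting the syntactic restriction that $\neq$ never sits under an odd number of negations, is the technical heart of the proof. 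Once these three constant-size formulas are in place, the constant-space reduction and the final membership claim follow immediately.
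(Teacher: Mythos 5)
Your treatment of functionality and $k$-valuedness is correct and matches the paper exactly: both are expressed by the introduction's formula (pairwise-distinct accepting paths over a common input $u$ with positively occurring atoms $v_i\neq v_j$), and for the twinning-style patterns you correctly identify the key simplification in $(\Z,+,0)$, namely that the delay $v_1-v_2$ changes under pumping iff the loop sums differ, so the transducer atom $\text{SDel}_{\neq}(v_1,v'_1,v_2,v'_2)$ can be replaced by the plain inequality $v'_1\neq v'_2$ --- which is precisely what the paper does for the (order-one) twinning property characterising determinisability.

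However, your pattern for $k$-sequentiality has a genuine gap. You propose $k+1$ run skeletons over a \emph{single} common loop (inputs of the form $uv^*$) together with a \emph{disjunction} over index pairs of loop-sum inequalities. Neither half is right. The disjunctive version is unsound: a $k$-sequential automaton can easily admit $k+1$ synchronised runs of which just one pair diverges on the loop (the remaining runs tracking one of the two), so the formula would be satisfied by members of the class. The conjunctive repair $\bigwedge_{i<j} v'_i\neq v'_j$ is sound but incomplete: non-$k$-sequentiality does not require a single loop on which all $k+1$ runs have pairwise distinct sums, because different pairs of runs may diverge at \emph{different} loops along the input. For instance, three behaviours $a^mba^n\mapsto 0$, $a^mba^n\mapsto m$, $a^mba^n\mapsto m+n$ are pairwise unboundedly divergent (hence not $2$-sequential when put in one automaton), yet on the first loop the sums are $0,1,1$ and on the second $0,0,1$ --- no single loop separates all pairs. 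This is exactly why the characterisation in \cite{DBLP:conf/fossacs/DaviaudJRV17}, on which the paper relies, is a \emph{branching} twinning property of order $k$ whose pattern consists of $k+1$ runs each traversing $k$ successive loops (a $(k+1)\times k$ grid), with divergence required pairwise but possibly realised at different loop blocks; this pattern is still a constant-size $\logic^{\neq}_\suma$ formula for fixed $k$ with the atoms $v'_i\neq v'_j$ occurring positively. Relatedly, your plan to prove necessity and sufficiency of the pattern yourself (``the technical heart'') is both doomed for your single-loop pattern and unnecessary: the paper's proof does no such work --- it cites the completeness result of \cite{DBLP:conf/fossacs/DaviaudJRV17} and only verifies expressibility of the known pattern in the restricted logic.
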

\begin{proof}
We have already shown in the introduction that functionality~\cite{DBLP:journals/corr/abs-1111-0862} and more generally $k$-valuedness~\cite{DBLP:conf/fsttcs/FiliotGR14} are expressible in $\logic^{\neq}_\suma$. The twinning property~\cite{DBLP:journals/corr/abs-1111-0862,DBLP:journals/jalc/AllauzenM03} is as well expressible in
    $\logic^{\neq}_\suma$, just by replacing in the formula expressing
    it for transducers (proof of Thm.~\ref{thm:expresstrans}) the atom
    $\text{SDel}_{\neq}(v_1,v'_1,v_2,v'_2)$ by $v'_1\neq v'_2$. 
In~\cite{DBLP:conf/fossacs/DaviaudJRV17}, a generalization of the
twinning property is shown to be complete for testing
$k$-sequentiality.$\hfill\square$
\end{proof}

The proof of the results below for $\logic_\suma$ follows arguments
that are similar to those developed for transducers in the proof of
Theorem~\ref{thm:maintrans}, and for the {\sc PTime} result for
$\logic^{\neq}_\suma$, we use a reduction to the $k$-valuedness
problem of sum-automata \cite{DBLP:conf/fsttcs/FiliotGR14}.

\begin{theorem}
\label{thm:PLsum-complexity}
	The model checking of sum-automata against formulas in $\logic_\suma$ is \textsc{PSpace-C}, \textsc{NP-C} when the formula is fixed, and {\sc NLogSpace-C} if in addition the values of the automaton are encoded in unary. The model checking of sum-automata against formulas in $\logic^{\neq}_\suma$ is \textsc{PSpace-C}, and in \textsc{PTime} when the formula is fixed (even if the values of the automaton are encoded in binary). 
\end{theorem}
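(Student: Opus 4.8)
The plan is to instantiate the good-class framework of Theorem~\ref{thm:algo} with Parikh automata, exactly as was done for transducers in the proof of Theorem~\ref{thm:maintrans}, and then to read off the four complexity bounds by analysing the size of the resulting acceptor. Recall that a Parikh automaton is an NFA equipped with a fixed number of $\Z$-valued counters updated along transitions, whose acceptance is conditioned by membership of the final counter vector in a semilinear set (equivalently, by an existential Presburger formula). For a sum-automaton $A$ and a finite set of path variables, such machines form a good class for $\M_\suma=(\Z,+,0)$ and $\O=\{\leq\}$: regular sets of path tuples are captured since Parikh automata subsume NFA (condition~\ref{goodreg}); an atom $t\leq t'$ and its negation $t'< t$ are captured by accumulating in one counter per output variable the value $\out$ of the corresponding path and letting the semilinear acceptance set test the resulting linear (in)equality (conditions~\ref{goodout} and~\ref{goodneg}); and Parikh automata are effectively closed under union and intersection, with decidable (indeed \textsc{NP}) emptiness, giving condition~\ref{goodclos} and effectiveness. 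Decidability of model-checking then follows from Theorem~\ref{thm:algo}.

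For the complexity of $\logic_\suma$, the acceptor $B$ recognising the tuples of paths that satisfy the whole formula (in negation normal form) has only polynomially many counters, one per output variable and one per $\leq$-atom, but a state space that is essentially the $n$-fold product of $A$ together with the polynomial control needed for the input prefix/length/membership gadgets, hence of exponential size when the number $n$ of quantifiers is part of the input. For a \emph{fixed} formula, $n$ is constant, so $B$ is polynomial with constantly many counters and its non-emptiness is an existential Presburger instance of polynomial size, placing the problem in \textsc{NP}. For an arbitrary formula $B$ is exponential, so the naive bound is \textsc{NExp}; the point, as for $\logic_\nfa$ and $\logic_\trans$, is that $B$ is never built explicitly: each of its states is describable in polynomial space, and a non-emptiness witness is searched on-the-fly, guessing transitions one at a time while maintaining the poly-many counters and checking the semilinear condition, which yields membership in \textsc{PSpace}. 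For the \textsc{PSpace} lower bound, non-emptiness of the intersection of $n$ DFA is expressible using only the state, input and path constraints already present in $\logic_\nfa$, a fragment of both $\logic_\suma$ and $\logic^{\neq}_\suma$; reading a sum-automaton as an NFA with null outputs thus makes both logics \textsc{PSpace}-hard.

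The \textsc{NP} lower bound for $\logic_\suma$ with a fixed formula and binary values is obtained by a reduction from \textsc{Partition}: an instance $x_1,\dots,x_k$ is encoded as the sum-automaton whose accepting paths choose, for each $i$, either the weight $x_i$ or $-x_i$, thereby realising exactly the values $\sum_i\varepsilon_i x_i$ with $\varepsilon_i\in\{+1,-1\}$; the fixed formula $\exists\pi\colon p\xrightarrow{u\mid v}q,\ \init(p)\wedge\final(q)\wedge v\leq\zero\wedge\zero\leq v$ then holds iff some sign choice sums to $0$, and the binary encoding of the $x_i$ is what makes this \textsc{NP}-hard. When, in addition, values are encoded in unary, the increments of $B$ are polynomially bounded and, for a fixed formula, $B$ has constantly many counters; the relevant partial sums are then polynomial, any needed cycle can be pumped a polynomial number of times, and hence a non-emptiness witness of polynomial length exists. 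Such a witness is guessed on-the-fly while storing only the current product state, a polynomially bounded length counter, and the constantly many counters of polynomial magnitude, all in $O(\log n)$ bits, giving \textsc{NLogSpace}; the matching hardness is directed $st$-reachability, expressed by a single existential path with $\init$/$\final$ endpoints.

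Finally, for $\logic^{\neq}_\suma$ the \textsc{PSpace} upper bound is inherited from $\logic_\suma$, since $t\neq t'$ is the $\logic_\suma$-macro $t< t'\vee t'< t$ and $\logic^{\neq}_\suma$ is obtained only by further restricting the use of this atom. The interesting case is \textsc{PTime} for a fixed formula, where I would exploit that $\neq$ occurs only positively: in negation normal form the output part is a positive Boolean combination of disequalities, which cannot pin a sum to an exact value (this is precisely what rules out the \textsc{Partition} reduction above) but only forces outputs to differ, which is exactly what $k$-valuedness measures. Inverting the expressiveness translation of Theorem~\ref{thm:propsum}, I would construct in polynomial time (possible since $\varphi$ is fixed) a sum-automaton $A_\varphi$ whose accepting runs over a common input correspond to the $n$-tuples of paths of $A$ satisfying the regular (state, input and path) part of $\varphi$, with output values laid out so that a conjunction of the positive $\neq$-atoms is realised exactly when these runs carry the required number of pairwise distinct values; disjunctions are handled by finitely many such automata. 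Deciding $A\models\varphi$ then reduces to a constant number of $k$-valuedness tests on the $A_\varphi$, each in \textsc{PTime} by~\cite{DBLP:conf/fsttcs/FiliotGR14}. The hard parts I anticipate are twofold: obtaining \textsc{PSpace} rather than the naive \textsc{NExp} for $\logic_\suma$, which forces the on-the-fly polynomial-space exploration of the exponential Parikh automaton together with a short-witness guarantee for its counters; and, for $\logic^{\neq}_\suma$, engineering the reduction to $k$-valuedness so that arbitrary positive Boolean combinations of disequalities and the input-side regular constraints are encoded faithfully and in polynomial size.
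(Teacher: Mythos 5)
Your proposal is correct and takes essentially the same route as the paper: Parikh automata as the effective good class of tuple acceptors with on-the-fly polynomial-space exploration exactly as in Theorem~\ref{thm:maintrans}, \textsc{PSpace}-hardness by DFA intersection, \textsc{NP}-hardness from \textsc{Partition} via the $\pm x_i$ sum-automaton (which is precisely the gadget the paper uses), polynomial pumping for the unary \textsc{NLogSpace} case, and, for fixed $\logic^{\neq}_\suma$ formulas, the reduction to $k$-valuedness of~\cite{DBLP:conf/fsttcs/FiliotGR14} that the paper states explicitly as the basis of its \textsc{PTime} bound. The two difficulties you flag at the end (the short-witness guarantee for the exponential Parikh automaton and the faithful polynomial-size encoding into $k$-valuedness) are exactly where the paper's detailed proof spends its technical effort, so your blueprint coincides with it.
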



\begin{corollary}[of Theorems~\ref{thm:PLsum-expr} and~\ref{thm:PLsum-complexity}]\label{cor:propsum}
The membership problem of sum-automata in the class of functional, $k$-valued, and $k$-sequential automata is decidable in {\sc PTime}.
\end{corollary}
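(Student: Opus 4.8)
The plan is to obtain the corollary as a direct composition of the expressiveness result (Theorem~\ref{thm:PLsum-expr}) with the complexity result (Theorem~\ref{thm:PLsum-complexity}), the crucial point being that the former produces formulas living in the restricted fragment $\logic^{\neq}_\suma$, for which the latter guarantees a \textsc{PTime} model-checking procedure whenever the formula is fixed. First I would fix the parameter $k\in\N$ once and for all, so that ``$k$-valued'' and ``$k$-sequential'' denote single concrete classes rather than parametrised families; the corollary is then a statement about three individual membership problems.

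For each of these three membership problems (functional, $k$-valued, $k$-sequential), Theorem~\ref{thm:PLsum-expr} supplies a reduction to the model-checking problem of $\logic^{\neq}_\suma$, together with the guarantee that the resulting formula $\varphi$ is \emph{constant} (independent of the input automaton) once $k$ is fixed. Exactly as in the analogous reductions for \nfa{} and transducers (Theorems~\ref{thm:propNFA} and~\ref{thm:proptrans}), the reduction associates with the given sum-automaton $A$ the fixed formula $\varphi$, leaving $A$ essentially unchanged, so that $A$ belongs to the class in question if and only if $A\models\varphi$. Since $\varphi$ has constant size and $A$ is merely copied, this reduction is computable in polynomial (indeed logarithmic) time.

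Next I would invoke the second part of Theorem~\ref{thm:PLsum-complexity}: model-checking a sum-automaton against a \emph{fixed} $\logic^{\neq}_\suma$ formula is in \textsc{PTime}, and this bound holds even when the output values of the automaton are encoded in binary. Composing the polynomial-time reduction of the previous paragraph with this \textsc{PTime} model-checking algorithm yields a \textsc{PTime} decision procedure for each of the three membership problems, which is precisely the claim of the corollary.

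The only point requiring care -- and the reason the statement is phrased in terms of $\logic^{\neq}_\suma$ rather than the full logic $\logic_\suma$ -- is that the favourable fixed-formula complexity (\textsc{PTime}) is available only for the inequality fragment; for $\logic_\suma$ with the order predicate $\leq$, the fixed-formula model-checking is merely \textsc{NP-C} by Theorem~\ref{thm:PLsum-complexity}. Hence the main thing to check is that Theorem~\ref{thm:PLsum-expr} genuinely lands each of the three properties in $\logic^{\neq}_\suma$, i.e. that none of the characterising formulas uses the order $\leq$ and that every disequality atom occurs under an even number of negations. This is exactly what Theorem~\ref{thm:PLsum-expr} asserts, so beyond this sanity check no further argument is needed: there is no genuine obstacle, the result being a straightforward composition of the two theorems.
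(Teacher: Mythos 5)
Your proposal is correct and matches the paper's (implicit) argument exactly: the corollary follows by composing the constant-formula reductions of Theorem~\ref{thm:PLsum-expr} with the fixed-formula \textsc{PTime} model-checking bound for $\logic^{\neq}_\suma$ from Theorem~\ref{thm:PLsum-complexity}. Your added sanity check --- that the reduction must land in the fragment $\logic^{\neq}_\suma$, since fixed-formula model-checking of full $\logic_\suma$ is only \textsc{NP-c} --- is precisely the point the paper relies on in phrasing Theorem~\ref{thm:PLsum-expr} in terms of $\logic^{\neq}_\suma$.
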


Note that we have shown that the $k$-valuedness property is
expressible in $\logic^{\neq}_\suma$, and so the $k$-valuedness
property is reducible to the model-checking problem of
$\logic^{\neq}_\suma$. Nevertheless, this result does not provide a
new algorithm for $k$-valuedness as our model-checking
algorithm is based on a reduction to
$k$-valuedness~\cite{DBLP:conf/fsttcs/FiliotGR14}.

%

%
%

    \vspace{-2mm}
\section{Extensions and Future Work}
\vspace{-2mm}

The logics we have presented can be extended in two ways by keeping
the same complexity results, no matter what the output monoid is. The first extension allows to express properties of automata
whose states can be coloured by an arbitrary (but fixed) set of
colours. This is useful for instance to express properties of disjoint
unions of automata, the colours allowing to identify the
subautomata. The second extension is adding a bunch of universal state
quantifiers before the formula. This does not change the complexity,
and allow for instance to express properties such as whether an
automaton is trim (all its states are accessible and
co-accessible). As future work, we would like to investigate other
monoids (discounted sum group for
instance~\cite{DBLP:journals/corr/abs-1111-0862}), and other data
structures for which transducers and weighted automata have been
defined: nested words, infinite words and trees are the main
structures we want to work on.

\vspace{-4mm}
 





	
\end{document}